\theoremstyle{plain}
\newtheorem{thm}{Theorem}[section]
\newtheorem{prop}[thm]{Proposition}
\newtheorem{lem}[thm]{Lemma}
\theoremstyle{remark}
\newtheorem{rem}[thm]{Remark}
\theoremstyle{definition}
\newtheorem{ex}[thm]{Example}
\newtheorem{assum}[thm]{Assumption}
\numberwithin{equation}{section}
\numberwithin{figure}{section}
\numberwithin{table}{section}
\title{Kernel-based collocation methods for \\ Heath-Jarrow-Morton models with  
Musiela parametrization}
\author{Yuki Kinoshita\thanks{yk.lgb13@gmail.com}}
\author[1]{Yumiharu Nakano\thanks{Corresponding author: nakano@c.titech.ac.jp}}
\affil[1]{Department of Mathematical and Computing Science\\ 

Tokyo Institute of Technology\\ 

W8-28, 2-12-1, Ookayama, Meguro-ku, Tokyo 152-8550, Japan}
\date{\today}
\begin{document}

\maketitle

\begin{abstract}
We propose kernel-based collocation methods for numerical solutions to 
Heath-Jarrow-Morton models with Musiela parametrization. 
The methods can be seen as the Euler-Maruyama approximation of 
some finite dimensional stochastic differential equations, and allow us to compute 
the derivative prices by the usual Monte Carlo methods. 
We derive a bound on the rate of convergence under some decay conditions on 
the interpolation functions and some regularity conditions on the volatility functionals. 
\begin{flushleft}
{\bf Key words}: 
Heath-Jarrow-Morton models, Musiela parametrization, 
kernel-based interpolation, collocation methods. 
\end{flushleft}
\begin{flushleft}
{\bf AMS MSC 2010}: 
65M70, 91G30, 60H15.
\end{flushleft}
\end{abstract}



\section{Introduction}\label{sec:1}

In this paper, we are concerned with numerical methods for Heath-Jarrow-Morton (HJM) models 
with Musiela parametrization. Consider the forward rate process $f(t,T)$, 
$0\le t\le T<\infty$, given as a family of It{\^o} processes, 
in an arbitrage-free bond market. 
Then, by Heath et.al \cite{hea-etal:1992}, the process $f(t,T)$ should evolve 
according to 
\begin{equation}
\label{eq:1.1} 
 df(t,T) = \alpha(t,T)dt + \sum_{i=1}^d\sigma_i(t,T)dW_i(t). 
\end{equation}
Here, this equation is defined on a complete probability space
$(\Omega,\mathcal{F},\mathbb{P})$ with a filtration $\{\mathcal{F}(t)\}_{t\ge 0}$  
satisfying the usual conditions. 
The probability measure $\mathbb{P}$ 
is interpreted as an equivalent local martingale measure as explained below. 
The process $W(t)=(W_1(t),\ldots,W_d(t))$, $t\ge 0$, is a standard $d$-dimensional 
$\{\mathcal{F}(t)\}$-Brownian motion under $\mathbb{P}$. 
The processes $\sigma_i(t,T)$, $i=1,\ldots,d$, are assumed to be
appropriately measurable and integrable, and the process $\alpha(t,T)$ is given by   
\begin{equation*}
 \alpha(t,T)= \sum_{i=1}^{d}\sigma_i(t,T)\int_t^T\sigma_i(t,s)ds. 
\end{equation*}
We refer to standard textbooks such as Musiela and Rutkowski \cite{mus-rut:2007},  
Shreve \cite{shr:2004}, Bj{\"o}rk \cite{bjo:2004} and the references therein 
for details and developments of HJM models \eqref{eq:1.1}. 
Then, Musiela \cite{mus:1993} shows that $r(t,x):=f(t,t+x)$, which is called 
the Musiela parametrization, is a mild solution to the stochastic partial differential equation 
\begin{equation}
\label{eq:1.2}
 dr(t,x) = \left(\frac{\partial}{\partial x}r(t,x) + \alpha(t,t+x)\right)dt 
  + \sum_{i=1}^d\sigma_i(t,t+x)dW_i(t)
\end{equation}
in a suitable function space. 
The equation \eqref{eq:1.2} is called the Heath-Jarrow-Morton-Musiela (HJMM) equation. 
Since then the existence and uniqueness of solutions to versions of \eqref{eq:1.2} have been vastly  
studied. See, e.g., Goldys and Musiela \cite{gol-mus:2001}, Filipovi{\'c} \cite{fil:2001}, 
Barski and Zabczyk \cite{bar-zab:2012}, 
Kusuoka \cite{kus:2000} and the references 
therein. 

As for numerical methods for \eqref{eq:1.2}, 
Barth \cite{bar:2010} studies the finite element methods and 
D{\"o}rsek and Teichmann \cite{dor-tei:2013} proposes a splitting up method. 
In the present paper, we examine kernel-based collocation methods for numerical solutions 
to \eqref{eq:1.2} when $\sigma$ depends on $f(t,\cdot)$, whence on $r(t,\cdot)$, 
as an alternative to existing methods. 

Given a points set 
$\Gamma=\{x_1,\ldots,x_N\}$ such that 
$0<x_1<\cdots <x_N$, and a positive definite function  
$\Phi:\mathbb{R}\to\mathbb{R}$, the function  
\[
 I(g)(x) = \sum_{j=1}^N(K^{-1}g|_{\Gamma})_j\Phi(x-x_j), \quad x\in\mathbb{R}, 
\]
interpolates $g$ on $\Gamma$. Here, $K=\{\Phi(x_j -x_{\ell})\}_{j,\ell=1,\ldots,N}$, 
$g|_{\Gamma}$ is the column vector composed of $g(x_j)$, 
$j=1,\ldots,N$, and 
$(K^{-1}g|_{\Gamma})_j$ denotes the $j$-th component of $K^{-1}g|_{\Gamma}\in\mathbb{R}^N$.  
Since one can expect 
\[
 \frac{d^m}{dx^m}g(x) \approx \frac{d^m}{dx^m}I(g)(x), \quad m=0,1, 
\]
replacing $r(t,\cdot)$ and $\partial r(t,x)/\partial x$ in the right-hand side in \eqref{eq:1.2} with 
$I(r(t))$ and $\partial I(r(t))(x)/\partial x$, respectively, 
gives a reasonable approximation of \eqref{eq:1.2}, and the resulting equation leads to 
an $N$-dimensional stochastic differential equation collocated at the points in $\Gamma$. 
See Section \ref{sec:3} below for a more precise derivation. 
The methods using the kernel-based interpolation as in above are called 
kernel-based collocation methods in general, which 
are first proposed by Kansa \cite{kan:1986} (see also Kansa \cite{kan:1990a, kan:1990b}). 
Since then many studies on numerical experiments and practical applications 
for the kernel-based methods are generated.  
Rigorous convergence issues are studied in, e.g.,  
Schaback \cite{sch:2010}, Cialenco et.al \cite{cia-etal:2012}, 
Hon et.al \cite{hon-etal:2014}, 
Nakano \cite{nak:2017, nak:2019a, nak:2019_sub}. 
Our aim is to address the kernel-based collocation methods in the problem of 
numerically solving HJMM equations and to obtain a bound on 
the rate of convergence for the methods. 
To this end, we use the stability result of the kernel-based interpolation with Wendland 
kernels proved in \cite{nak:2019a} 
and develop the error estimation result for the interpolation in a class of 
functions having relatively low regularities. 

This paper is organized as follows. The next section is devoted to the proof of the existence and 
uniqueness result for \eqref{eq:1.2} in a Hilbert space that is suitable for our purpose.  
We describe the kernel-based collocation methods in details and derive 
the approximation error in Section \ref{sec:3}.  
In Section \ref{sec:4} we apply our numerical methods to 
the pricing problem of the caplets.

\section{HJMM equations}\label{sec:2}

We describe Heath-Jarrow-Morton models with Musiela parametrization 
or HJMM equations for 
interest rate modeling in a way suitable for our purpose. 
Our setup is based on \cite{fil:2001}, with a slight modification. 

First, we introduce several notation. 
Let $\mathbb{R}_+=[0,\infty)$. 
For any open or closed set $V\subset \mathbb{R}$ 
we write $\mathcal{B}(V)$ for the Borel $\sigma$-field of $V$. 
We use $\mathrm{Leb}$ to denote 
the Lebesgue measure on $(\mathbb{R},\mathcal{B}(\mathbb{R}))$.  
We put $L^p(V)=L^p(V,\mathcal{B}(V), \mathrm{Leb})$ 
for $p\in [1,\infty]$ and denote by $\|\cdot\|_{L^p(V)}$ its norm. 
We also denote by 
$L^1_{loc}(\mathbb{R}_+)$
the collection of all Borel measurable and locally Lebesgue integrable functions on 
$\mathbb{R}_+$. 
Denote by $C^k(V)$  
the space of all $C^k$-functions on $V$, and 
by $C^k_b(V)$ the collection of all functions in $C^k(V)$ such that   
\[
 \|u\|_{C^k(V)} := \sum_{m=0}^k\sup_{x\in V}
 \left|\frac{d^mu}{dx^m}(x)\right| < \infty. 
\] 
By $C$ we denote positive constants that may vary at each occurrence and that do not depend on  
time and spatial variables in $\mathbb{R}_+$, elements in $\Omega$ and $U$, 
and the approximation parameter $h$ introduced below. 

We work in the Hilbert space 
\begin{equation*}
 U := \left\{\phi\in L^1_{loc}(\mathbb{R}_+) \;\;\bigg|\;\;
 \begin{aligned}
 &\text{there exist the generalized derivatives } \\
 &\phi^{\prime}, \phi^{\prime\prime}\in L^1_{loc}(\mathbb{R}_+) 
 \text{ of } \phi \text{ such that } \|\phi\|_U<\infty
\end{aligned}
\right\}
\end{equation*}
with the norm $\|\cdot\|_U$ defined by  
\begin{equation*}
 \|\phi\|_U^2=|\phi(0)|^2 + |\phi^{\prime}(0)|^2 
   + \int_0^{\infty}\left(|\phi^{\prime}(x)|^2+|\phi^{\prime\prime}(x)|^2\right)w(x)dx, 
\end{equation*}
where  
$w:\mathbb{R}_+\to [1,\infty)$ is a nondecreasing $C^1$-function such that 
$w^{-1/3}\in L^1(\mathbb{R}_+)$.  

We consider the mapping $S(t):U\to U$ defined by $S(t)\phi(x)=\phi(t+x)$, $t,x\in\mathbb{R}_+$. 
It is clear that $\{S(t)\}_{t\in\mathbb{R}_+}$ defines a semigroup on $U$. 
Moreover we have the following: 
\begin{prop}
\label{prop:2.1}
\begin{enumerate}[\rm (i)]
\item The Hilbert space $U$ is separable and satisfies 
$U\subset C^{1}_b(\mathbb{R}_+)$. In particular, 
\[
 \|\phi\|_{L^{\infty}(\mathbb{R}_+)} + \|\phi^{\prime}\|_{L^{\infty}(\mathbb{R}_+)} 
 + \|\phi^{\prime}\|_{L^1(\mathbb{R}_+)} + \|\phi^{\prime\prime}\|_{L^1(\mathbb{R}_+)} 
\le C \|\phi\|_U. 
\]
\item The semigroup $\{S(t)\}_{t\in\mathbb{R}_+}$ is strongly continuous on $U$, 
and the domain of its generator $A$ is given by $\{\phi\in U: \phi^{\prime}\in U\}$. 
Moreover, $A$ satisfies $A\phi = \phi^{\prime}$.  
\end{enumerate}
\end{prop}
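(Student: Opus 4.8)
The plan is to prove this proposition in two parts, following the structure of the statement itself.

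The plan is to prove the two assertions in turn, with the embedding in (i) supplying the tools needed for (ii). For (i), the first observation is that $w\ge 1$ together with $w^{-1/3}\in L^1(\mathbb{R}_+)$ gives $w^{-1}\le w^{-1/3}$, hence $\int_0^\infty w(x)^{-1}dx<\infty$. By Cauchy--Schwarz,
\[
\int_0^\infty|\phi'(x)|\,dx\le\Bigl(\int_0^\infty|\phi'(x)|^2w(x)\,dx\Bigr)^{1/2}\Bigl(\int_0^\infty w(x)^{-1}dx\Bigr)^{1/2}\le C\|\phi\|_U,
\]
and the same estimate applied to $\phi''$ yields $\|\phi''\|_{L^1(\mathbb{R}_+)}\le C\|\phi\|_U$. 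Since $\phi''\in L^1(\mathbb{R}_+)$, the function $\phi'$ is absolutely continuous with $\phi'(x)=\phi'(0)+\int_0^x\phi''(s)\,ds$, so $\|\phi'\|_{L^\infty(\mathbb{R}_+)}\le|\phi'(0)|+\|\phi''\|_{L^1(\mathbb{R}_+)}\le C\|\phi\|_U$; the same reasoning applied to $\phi$ and $\phi'$ shows $\phi$ is continuous with $\|\phi\|_{L^\infty(\mathbb{R}_+)}\le C\|\phi\|_U$. As $\phi'$ is continuous, $\phi\in C^1_b(\mathbb{R}_+)$, and collecting the four bounds gives the stated inequality and the embedding $U\subset C^1_b(\mathbb{R}_+)$.

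For separability (and completeness), I would view the linear map $\phi\mapsto(\phi(0),\phi'(0),\phi',\phi'')$ as an isometry of $U$ into the Hilbert space $\mathbb{R}^2\times L^2(\mathbb{R}_+,w\,dx)\times L^2(\mathbb{R}_+,w\,dx)$. Since $w\,dx$ is a $\sigma$-finite Radon measure the target space is separable, and any subset of a separable metric space is separable; completeness of $U$ follows by taking a Cauchy sequence, extracting the limits of its four components, and checking via the embedding estimates of (i) that the reconstructed limit again lies in $U$ with the correct generalized derivatives.

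For (ii), the key point is that each $S(t)$ is a contraction on $L^2(\mathbb{R}_+,w\,dx)$: substituting $y=t+x$ and using that $w$ is nondecreasing gives $\int_0^\infty|g(t+x)|^2w(x)\,dx=\int_t^\infty|g(y)|^2w(y-t)\,dy\le\|g\|_{L^2(w\,dx)}^2$. Strong continuity of the shift on $L^2(\mathbb{R}_+,w\,dx)$ then follows by the standard density argument, verifying $\|S(t)g-g\|_{L^2(w\,dx)}\to0$ first for $g\in C_c(\mathbb{R}_+)$ (where $w$ is locally bounded and dominated convergence applies) and extending by the uniform bound. Because $(S(t)\phi)'=S(t)\phi'$ and $(S(t)\phi)''=S(t)\phi''$, the squared $U$-norm of $S(t)\phi-\phi$ splits into the point-evaluation terms $|\phi(t)-\phi(0)|^2+|\phi'(t)-\phi'(0)|^2$, which vanish as $t\to0$ since $\phi,\phi'$ are continuous by (i), and the two weighted integrals $\|S(t)\phi'-\phi'\|_{L^2(w\,dx)}^2+\|S(t)\phi''-\phi''\|_{L^2(w\,dx)}^2$, which vanish by the strong continuity just established.

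Finally, to identify the generator $A$ I would use the integral characterization of $C_0$-semigroups. For $\phi\in U$ with $\phi'\in U$ the map $s\mapsto S(s)\phi'$ is continuous into $U$, so the Bochner integral $\int_0^tS(s)\phi'\,ds$ exists in $U$ and, evaluated pointwise, equals $\phi(t+\cdot)-\phi(\cdot)=S(t)\phi-\phi$; dividing by $t$ and letting $t\to0$ gives $\phi\in D(A)$ with $A\phi=\phi'$. Conversely, if $\phi\in D(A)$ with $A\phi=\psi$, then convergence in $U$ forces uniform convergence by the embedding of (i), whence $\psi(x)=\lim_{t\to0}t^{-1}(\phi(t+x)-\phi(x))=\phi'(x)$; since $\psi\in U$ this shows $\phi'\in U$, and therefore $D(A)=\{\phi\in U:\phi'\in U\}$. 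I expect the main obstacle to be the strong-continuity step, where the monotonicity of $w$ is essential in order to obtain the contraction bound that legitimizes the density argument on $L^2(\mathbb{R}_+,w\,dx)$.
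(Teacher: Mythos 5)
Your proposal is correct and follows essentially the same route as the paper: the same Cauchy--Schwarz embedding estimates using $w^{-1}\le w^{-1/3}\in L^1(\mathbb{R}_+)$, separability via an isometric embedding into a separable Hilbert space, strong continuity via the monotonicity of $w$ and density of compactly supported continuous functions in $L^2(\mathbb{R}_+,w\,dx)$, and identification of the generator through the continuity of point evaluation. The only cosmetic differences are that the paper embeds $U$ into $U_1\times U_1$ (citing Filipovi{\'c}) and constructs the dense approximating functions explicitly, and it identifies the generator by a direct difference-quotient estimate rather than your Bochner-integral characterization; both variants are standard and sound.
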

\begin{proof}
First we will confirm that $U$ is separable. To this end, consider the Hilbert space 
\[
 U_1:=\{\phi\in L^1_{loc}(\mathbb{R}_+): \text{there exists } 
 \phi^{\prime}\in L^1_{loc}(\mathbb{R}_+) 
 \text{ of } \phi \text{ such that } \|\phi\|_{U_1}<\infty\}, 
\]
where 
\[
 \|\phi\|_{U_1}^2=|\phi(0)|^2 + \int_0^{\infty}|\phi^{\prime}(x)|^2w(x)dx. 
\]
By Theorem 5.1.1 in \cite{fil:2001}, the space $U_1$ is separable. 
Then, $U$ is isometric to a closed subspace of $U_1\times U_1$ 
by the mapping $\phi\mapsto (\phi,\phi^{\prime})$. 
This shows that $U$ is indeed separable. 

Since $\phi\in U$ has the generalized derivatives $\phi^{\prime}$ and $\phi^{\prime\prime}$, 
we can write 
\begin{equation}
\label{eq:2.1}
 \phi(x) - \phi(y) = \int_y^x\phi^{\prime}(z)dz, \quad  
 \phi^{\prime}(x) - \phi^{\prime}(y) = \int_y^x\phi^{\prime\prime}(z)dz, \quad x,y\in\mathbb{R}_+. 
\end{equation}
Further, as in the proof of Theorem 5.1.1 in \cite{fil:2001}, we see 
\begin{equation*}
 \|\phi^{\prime}\|_{L^1(\mathbb{R}_+)}\le 
 \left(\int_0^{\infty}|\phi^{\prime}(x)|^2w(x)dx\right)^{1/2}
 \|w^{-1}\|_{L^1(\mathbb{R}_+)}^{1/2}\le C\|\phi\|_U<\infty. 
\end{equation*}
Combining this with \eqref{eq:2.1}, we have 
$\|\phi\|_{L^{\infty}(\mathbb{R}_+)}\le C\|\phi\|_U$. 
Similarly, we see $\|\phi^{\prime\prime}\|_{L^1(\mathbb{R}_+)}\le C \|\phi\|_U$, and so 
$\|\phi^{\prime}\|_{L^{\infty}(\mathbb{R}_+)}\le C \|\phi\|_U$. 
Thus the claim (i) follows. 

It can be easily seen that $(S(t)\phi)^{\prime}$ and $(S(t)\phi)^{\prime\prime}$ exist 
and are given by $(S(t)\phi)^{\prime}(x)=S(t)\phi^{\prime}(x)$ and 
$(S(t)\phi)^{\prime\prime}(x)=S(t)\phi^{\prime\prime}(x)$ for $\phi\in U$. 
Using \eqref{eq:2.1} and the monotonicity of $w$, we find 
\begin{align*}
 \|S(t)\phi\|_U^2&= |\phi(t)|^2 + |\phi^{\prime}(t)|^2 
 + \int_0^{\infty}(|\phi^{\prime}(t+x)|^2+|\phi^{\prime\prime}(t+x)|^2)w(x)dx \\ 
&\le C\|\phi\|_U^2 + \int_0^{\infty}
 (|\phi^{\prime}(t+x)|^2+|\phi^{\prime\prime}(t+x)|^2)w(t+x)dx 
 \le C\|\phi\|_U^2.  
\end{align*}
This means that $S(t)$ is bounded on $U$ for all $t\in\mathbb{R}_+$. 
To prove the strong continuity of $\{S(t)\}_{t\in\mathbb{R}_+}$, by the claim (i), 
it suffices to show that for any $t_0\in\mathbb{R}_+$ and Borel measurable function 
$g$ on $\mathbb{R}_+$ with $g^2w\in L^1(\mathbb{R}_+)$, 
\begin{equation}
\label{eq:2.1.3}
 \lim_{t\to t_0}\int_0^{\infty}|g(t+x)-g(t_0+x)|^2w(x)dx = 0. 
\end{equation}
To this end, for any $\varepsilon>0$ take  
a bounded $E_{\varepsilon}\in\mathcal{B}(\mathbb{R}_+)$ and 
a continuous function $g_{\varepsilon}$ on $\mathbb{R}_+$ such that 
$g_{\varepsilon}(x)=0$ for $x\notin E_{\varepsilon}$ and that 
\[
 \int_0^{\infty}|g(x)-g_{\varepsilon}(x)|^2w(x)dx < \varepsilon. 
\]
The existences of $E_{\varepsilon}$ and $g_{\varepsilon}$ can be 
proved by a routine argument in measure theory, but for completeness 
we give a proof later. 
Suppose at this moment that there exist such $E_{\varepsilon}$ and $g_{\varepsilon}$. 
Then take $\ell>0$ so that $t+E_{\varepsilon}, t_0+E_{\varepsilon} \subset [0,\ell]$ for 
$t\ge 0$ with $|t-t_0|\le 1$. By the monotonicity of $w$, 
\begin{align*}
 &\int_0^{\infty}|g(t+x)-g(t_0+x)|^2w(x)dx \\ 
 &\le 3\int_0^{\infty}|g(t+x)-g_{\varepsilon}(t+x)|^2w(x)dx 
 +3\int_0^{\infty}|g_{\varepsilon}(t+x)-g_{\varepsilon}(t_0+x)|^2w(x)dx \\ 
 &\quad +3\int_0^{\infty}|g_{\varepsilon}(t_0+x)-g(t_0+x)|^2w(x)dx \\ 
&\le 3\int_0^{\infty}|g(t+x)-g_{\varepsilon}(t+x)|^2w(t+x)dx 
 +3\int_0^{\ell}|g_{\varepsilon}(t+x)-g_{\varepsilon}(t_0+x)|^2w(x)dx \\ 
 &\quad +3\int_0^{\infty}|g_{\varepsilon}(t_0+x)-g(t_0+x)|^2w(t_0+x)dx \\ 
 &\le 6\int_0^{\infty}|g(x)-g_{\varepsilon}(x)|^2w(x)dx 
   + 3\sup_{x\in [0,\ell]}|g_{\varepsilon}(t+x)-g_{\varepsilon}(t_0+x)|^2\int_0^{\ell}w(x)dx  
\end{align*}
Thus the uniform continuity of $g_{\varepsilon}$ leads to 
\[
 \limsup_{t\to t_0}\int_0^{\infty}|g(t+x)-g(t_0+x)|^2w(x)dx\le 6\varepsilon, 
\]
whence \eqref{eq:2.1.3}. 

To confirm the existences of $E_{\varepsilon}$ and $g_{\varepsilon}$, 
first notice that we can assume $g\ge 0$ without loss of generality. 
Then there exists a nondecreasing sequence of simple functions $\{g_n\}$ such that 
$g_n$ vanishes outside $[0,n)$ and $g_n\to g$ a.e. By the monotone convergence theorem, 
we also have 
\[
 \lim_{n\to\infty}\int_0^{\infty}|g(x)-g_n(x)|^2w(x)dx = 0. 
\]
Fix $n\in\mathbb{N}$ such that 
\[
 \int_0^{\infty}|g(x)-g_{n}(x)|^2w(x)dx <\frac{\varepsilon}{4}. 
\]
Suppose that $g_n$ is represented as $g_n=\sum_{j=1}^m\alpha_j1_{E_j}$. 
By the absolute continuity of the Lebesgue integral, for each $j=1,\ldots,m$ there exists 
$\delta_j>0$ such that for any $E^{\prime}\in\mathcal{B}([0,n))$ with 
$\mathrm{Leb}(E^{\prime})<\delta_j$ we have 
\[
 \int_{E^{\prime}}w(x)dx < \frac{\varepsilon}{4m^2\alpha_j^2}. 
\]
Now take a closed set $F_j$ and a open set $G_j$ such that 
$F_j\subset E_j\subset G_j\subset [0,n)$ with $\mathrm{Leb}(G_j\setminus F_j)<\delta_j$. 
Define the continuous function $\rho_j$ on $\mathbb{R}_+$ by 
\[
 \rho_j(x) = \frac{\inf_{y\in F_j}|x-y|}{\inf_{y\in F_j}|x-y| + \inf_{y\in G_j^c}|x-y|}, \quad 
 x\in \mathbb{R}_+. 
\]
It is straightforward to see that 
\[
 \int_0^{\infty}|1_{E_j}(x)-\rho_j(x)|^2w(x)dx \le \int_{G_j\setminus F_j}w(x)dx 
 < \frac{\varepsilon}{4m^2\alpha_j^2}. 
\]
Thus the function $g_{\varepsilon}:=\sum_{j=1}^m\alpha_j\rho_j$ satisfies 
\[
 \int_0^{\infty}|g_n-g_{\varepsilon}|^2w(x)dx \le m\sum_{j=1}^m\alpha_j^2 
 \int_0^{\infty}|1_{E_j}(x)-\rho_j(x)|^2w(x)dx <\frac{\varepsilon}{4}. 
\]
Therefore 
\begin{align*}
 \int_0^{\infty}|g(x) - g_{\varepsilon}(x)|^2w(x)dx 
 &\le 2\int_0^{\infty}|g(x)-g_n(x)|^2w(x)dx + 2\int_0^{\infty}|g_n(x)-g_{\varepsilon}(x)|^2w(x)dx \\ 
 &<\varepsilon, 
\end{align*}
as claimed.

Now, as in the proof of Corollary 5.1.1 in \cite{fil:2001}, we observe, for $\phi\in U$ with 
$\phi^{\prime}\in U$, 
\begin{align*}
 &\left\|\frac{S(t)\phi-\phi}{t}-\phi^{\prime}\right\|_U^2 \\ 
 &\le \left|\frac{\phi(t)-\phi(0)}{t}-\phi^{\prime}(0)\right|^2 
  + \left|\frac{\phi^{\prime}(t)-\phi^{\prime}(0)}{t}-\phi^{\prime\prime}(0)\right|^2 
  + 2\int_0^1\|S(st)\phi^{\prime}-\phi^{\prime}\|^2_Uds \to 0, 
\end{align*}
as $t\to 0$. Hence $A\phi=\phi^{\prime}$. 
Moreover, by the claim (i), the pointwise evaluation operator is continuous. 
This together with the strong continuity of $S$ means that the domain of 
$A$ is included in $\{\phi\in U: \phi^{\prime}\in U\}$ (see Lemma 4.2.2 in \cite{fil:2001}). 
Thus the claim (ii) follows. 
\end{proof}

Let $\sigma_i$, $i=1,\ldots,d$, be measurable mappings from 
$(\mathbb{R}_+\times\Omega\times U, \mathcal{P}\otimes\mathcal{B}(U))$ into 
$(U,\mathcal{B}(U))$, 
where $\mathcal{P}$ denotes the predictable $\sigma$-field 
and $\mathcal{B}(U)$ is the Borel $\sigma$-field of $U$, 
such that $\lim_{x\to\infty}\sigma_i(t,\omega,\phi)(x)=0$ for every 
$i=1,\ldots,d$, $t\in\mathbb{R}_+$, $\omega\in\Omega$, and $\phi\in U$. 
Further, we assume that the following hold: 
\begin{assum}
\label{assum:2.2}
There exists a constant $C_1\in (0,\infty)$ such that for $i=1,\ldots,d$ and   
$(t,\omega,\phi,\psi)\in \mathbb{R}_+\times\Omega\times U\times U$, 
\begin{gather*}
 \|\sigma_i(t,\omega,\phi)\|_{U}\le C_1, \\
 \|\sigma_i(t,\omega,\phi)-\sigma_i(t,\omega,\psi)\|_{U}\le C_1\|\phi-\psi\|_U.
\end{gather*}
\end{assum}

Define the mapping $\alpha$ defined on $\mathbb{R}_+\times\Omega\times U$ by 
\begin{equation*}
 \alpha(t,\omega,\phi)(x):= \sum_{j=1}^{d}\sigma_j(t,\phi)(x)\int_0^x\sigma_j(t,\phi)(y)dy, 
  \quad x\in\mathbb{R}_+. 
\end{equation*}
Then we have the following: 
\begin{lem} 
Under Assumption {\rm \ref{assum:2.2}}, 
the mapping $\alpha$ is measurable from 
$(\mathbb{R}_+\times\Omega\times U,\mathcal{P}\otimes\mathcal{B}(U))$ into 
$(U,\mathcal{B}(U))$. Moreover there exists a constant $C_2\in (0,\infty)$ such that 
for $(t,\omega,\phi,\psi)\in\mathbb{R}_+\times\Omega\times U\times U$,  
\begin{gather*}
  \|\alpha(t,\omega,\phi)\|_U\le C_2, \\
  \|\alpha(t,\omega,\phi)-\alpha(t,\omega,\psi)\|_U\le C_2\|\phi-\psi\|_U. 
\end{gather*}
\end{lem}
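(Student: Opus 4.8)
The plan is to reduce both estimates to bounds on the building blocks $g_j:=\sigma_j(t,\omega,\phi)\in U$ and their primitives $G_j(x):=\int_0^x g_j(y)\,dy$, since $\alpha(t,\omega,\phi)=\sum_{j=1}^d g_jG_j$. Differentiating by the product rule and using $G_j'=g_j$, $G_j(0)=0$, one finds $\alpha(t,\omega,\phi)(0)=0$, $\alpha'(t,\omega,\phi)(0)=\sum_j g_j(0)^2$, and
\[
 \alpha'(t,\omega,\phi)=\sum_{j=1}^d\bigl(g_j'G_j+g_j^2\bigr),\qquad
 \alpha''(t,\omega,\phi)=\sum_{j=1}^d\bigl(g_j''G_j+3g_jg_j'\bigr),
\]
so that estimating $\|\alpha(t,\omega,\phi)\|_U^2$ reduces to bounding the weighted $L^2$-norms of these four types of terms together with $|\alpha'(0)|^2\le\sum_j\|g_j\|_{L^\infty(\mathbb{R}_+)}^2$. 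By Proposition \ref{prop:2.1}(i) we have $\|g_j\|_{L^\infty(\mathbb{R}_+)}\le C\|g_j\|_U$ and $\int_0^\infty|g_j'|^2w\,dx\le\|g_j\|_U^2$, which immediately controls $\int_0^\infty|g_jg_j'|^2w\,dx\le\|g_j\|_{L^\infty(\mathbb{R}_+)}^2\|g_j\|_U^2$; the remaining terms require controlling $G_j$ and $\int_0^\infty|g_j|^4w\,dx$.

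The crucial step, and the one place the hypothesis $w^{-1/3}\in L^1(\mathbb{R}_+)$ enters, is the pointwise decay estimate
\[
 |g_j(x)|=\left|\int_x^\infty g_j'(z)\,dz\right|
 \le\left(\int_x^\infty|g_j'(z)|^2w(z)\,dz\right)^{1/2}\left(\int_x^\infty w(z)^{-1}\,dz\right)^{1/2}
 \le C\|g_j\|_U\,w(x)^{-1/3},
\]
where the first equality uses the assumed decay $\lim_{z\to\infty}g_j(z)=0$, the second step is Cauchy--Schwarz, and the last bound follows by writing $w^{-1}=w^{-1/3}w^{-2/3}\le w(x)^{-2/3}w^{-1/3}$ on $[x,\infty)$ (monotonicity of $w$) and invoking $w^{-1/3}\in L^1(\mathbb{R}_+)$. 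Integrating this estimate gives $\|g_j\|_{L^1(\mathbb{R}_+)}\le C\|g_j\|_U$, hence $\|G_j\|_{L^\infty(\mathbb{R}_+)}\le C\|g_j\|_U\le CC_1$. With $G_j$ bounded, $\int_0^\infty|g_j'G_j|^2w\,dx$ and $\int_0^\infty|g_j''G_j|^2w\,dx$ are at most $\|G_j\|_{L^\infty(\mathbb{R}_+)}^2\|g_j\|_U^2$, while the same pointwise estimate yields $|g_j|^4w\le C\|g_j\|_U^4w^{-1/3}$, which is integrable. Combining all pieces with $\|g_j\|_U=\|\sigma_j(t,\omega,\phi)\|_U\le C_1$ gives $\|\alpha(t,\omega,\phi)\|_U\le C_2$.

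For the Lipschitz estimate I set $u_j:=\sigma_j(t,\omega,\phi)-\sigma_j(t,\omega,\psi)$, $\tilde g_j:=\sigma_j(t,\omega,\psi)$, and $U_j:=\int_0^\cdot u_j(y)\,dy$, so that $\|U_j\|_{L^\infty(\mathbb{R}_+)}\le\|u_j\|_{L^1(\mathbb{R}_+)}\le C\|u_j\|_U$ by the same argument as above. The bilinear decomposition
\[
 \alpha(t,\omega,\phi)-\alpha(t,\omega,\psi)=\sum_{j=1}^d u_jG_j+\sum_{j=1}^d\tilde g_jU_j
\]
expresses the difference as a sum of products of a $U$-function with a primitive, exactly the form already estimated; every resulting term carries one factor of $\|u_j\|_U$ or $\|U_j\|_{L^\infty(\mathbb{R}_+)}\le C\|u_j\|_U$, so the Lipschitz hypothesis $\|u_j\|_U\le C_1\|\phi-\psi\|_U$ gives $\|\alpha(t,\omega,\phi)-\alpha(t,\omega,\psi)\|_U\le C_2\|\phi-\psi\|_U$.

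Finally, measurability follows by factoring $\alpha=\Theta\circ(\sigma_1,\ldots,\sigma_d)$, where on the closed separable subspace $U_0:=\{v\in U:\lim_{x\to\infty}v(x)=0\}$, into which each $\sigma_j$ maps by hypothesis, the map $\Theta:U_0^d\to U$ is given by $\Theta(v)(x)=\sum_j v_j(x)\int_0^x v_j(y)\,dy$. The restriction to $U_0$ is essential: for general $v\in U$ the primitive grows linearly and $\Theta(v)$ fails to lie in $U$. The map $(t,\omega,\phi)\mapsto(\sigma_1(t,\omega,\phi),\ldots,\sigma_d(t,\omega,\phi))$ is $\mathcal{P}\otimes\mathcal{B}(U)$-measurable into $U_0^d$ by hypothesis, and the estimates above, carried out with $C_1$ replaced by the norms $\|v_j\|_U$, show that $\Theta$ is well-defined and locally Lipschitz, hence continuous and Borel measurable on $U_0^d$; the composition is then measurable. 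The only delicate point in the whole argument is the pointwise decay estimate, which is precisely where the structural condition $w^{-1/3}\in L^1(\mathbb{R}_+)$ is used; once it is in hand, every remaining step is a routine application of Proposition \ref{prop:2.1}(i), Cauchy--Schwarz, and the product rule.
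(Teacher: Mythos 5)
Your proof is correct and follows the same overall strategy as the paper's: expand the $U$-norm of the product $\phi(x)\int_0^x\phi$, control each term via the embedding $U\subset C^1_b$, a weighted quartic estimate, and an $L^\infty$ bound on the primitive, and then obtain the Lipschitz bound from a bilinear splitting of the difference. The genuine difference is in how the key inputs are obtained. The paper imports two estimates from Filipovi{\'c} (the bound $\|(\phi-\phi(\infty))^4w\|_{L^1(\mathbb{R}_+)}\le C\|\phi\|_U^4$ from the proof of his Theorem 5.1.1, and his Corollary 5.1.2 for the middle term of the Lipschitz estimate), whereas you re-derive everything from the single pointwise decay estimate $|\phi(x)|\le C\|\phi\|_U\,w(x)^{-1/3}$ for $\phi\in U$ with $\phi(\infty)=0$, which you prove directly by Cauchy--Schwarz, the monotonicity of $w$, and the hypothesis $w^{-1/3}\in L^1(\mathbb{R}_+)$; this makes the argument self-contained and isolates exactly where the structural condition on $w$ is used. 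Your bilinear decomposition $u_jG_j+\tilde g_jU_j$ is algebraically the same as the paper's termwise splitting into $I_1,I_2,I_3$. You are also more careful than the paper on two points it passes over: the measurability of $\alpha$ (you factor through a continuous, locally Lipschitz map $\Theta$ on the closed subspace of functions vanishing at infinity) and the fact that $\Theta$ is only well-defined there, since for a general $\phi\in U$ the primitive grows linearly. As a minor remark, your formula $(g_jG_j)''=g_j''G_j+3g_jg_j'$ is the correct one; the paper's displayed expression contains $(\phi')^2$ where $\phi\phi'$ should appear, though this does not affect its estimates. The one place you are slightly terse is the term $(g_j+\tilde g_j)u_j$ arising from differentiating the bilinear decomposition once, whose weighted $L^2$ norm requires the decay of \emph{both} factors (giving an integrable $w^{-1/3}$ weight), but the mechanism is exactly the quartic estimate you already established, so nothing is missing in substance.
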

\begin{proof}
We consider the functional $\mathcal{S}$ on $U$ defined by 
\[
 \mathcal{S}\phi(x) = \phi(x)\int_0^x\phi(y)dy, \quad x\in\mathbb{R}_+. 
\]
Then, from the proof of Theorem 5.1.1 in \cite{fil:2001}
we have 
\[
 \|(\phi-\phi(\infty))^4w\|_{L^1(\mathbb{R}_+)}\le C\|\phi\|_U^4, \quad \phi\in U. 
\]
Using this we obtain for $\phi\in U$ with $\phi(\infty)=0$,
\begin{align*}
 &\|\mathcal{S}\phi\|_U^2 \\ 
 &= |\phi^2(0)|^2 
 + \int_0^{\infty}\left(\phi^{\prime}(x)\int_0^x\phi(y)dy + \phi^2(x)\right)^2 w(x)dx \\
 &\quad + \int_0^{\infty}\left(\phi^{\prime\prime}(x)\int_0^x\phi(y)dy + (\phi^{\prime}(x))^2 
 + 2\phi(x)\phi^{\prime}(x)\right)^2w(x)dx \\ 
 &\le |\phi(0)|^4 + \int_0^{\infty}
  \left((2|\phi^{\prime}(x)|^2 + 3|\phi^{\prime\prime}(x)|^2)
  \left(\int_0^x\phi(y)dy\right)^2 + 5\phi^4(x) + 6(\phi^{\prime}(x))^4\right)w(x)dx \\ 
 &\le \|\phi\|_{L^{\infty}(\mathbb{R}_+)}^4 
 + 5\|\phi\|_U^2\|\phi\|_{L^1(\mathbb{R}_+)}^2 
 + 5\|\phi\|_U^4
 + 6\|\phi^{\prime}\|_{L^{\infty}(\mathbb{R}_+)}^2\|\phi\|_U^2. 
\end{align*}
Using Proposition \ref{prop:2.1}, we obtain 
$\|\mathcal{S}\phi\|_U^2 \le C\|\phi\|_U^4$ for $\phi\in U$ with $\phi(\infty)=0$. 
This and the boundedness of $\sigma_i$ yield, for $\phi\in U$, 
\[
 \|\alpha(t,\phi)\|_U\le \sum_{i=1}^d\|\mathcal{S}\sigma_i(t,\phi)\|_U
 \le C\sum_{i=1}^d\|\sigma_i(t,\phi)\|_U^2\le C. 
\]
In particular, $\alpha$ is measurable and $U$-valued. 

Next, for $\phi,\psi\in U$, observe 
$\|\mathcal{S}\phi-\mathcal{S}\psi\|_U^2 = I_1+I_2+I_3$, where 
\begin{align*}
 I_1&= |\phi^2(0)-\psi^2(0)|^2, \\ 
 I_2&= \int_0^{\infty}\left\{\phi^{\prime}(x)\int_0^x\phi(y)dy 
  + \phi^2(x)-\psi^{\prime}(x)\int_0^x\psi(y)dy-\psi(y)^2\right\}^2w(x)dx,  \\ 
 I_3 &= \int_0^{\infty}\bigg\{\phi^{\prime\prime}(x)\int_0^x\phi(y)dy 
  + \phi^{\prime}(x)^2 + 2\phi(x)\phi^{\prime}(x) 
  - \psi^{\prime\prime}(x)\int_0^x\psi(y)dy-\psi^{\prime}(y)^2 \\ 
 &\hspace*{3em} - 2\psi(x)\psi^{\prime}(x)\bigg\}^2w(x)dx. 
\end{align*}
By Proposition \ref{prop:2.1} (i), we have 
\[
I_1=(\phi(0)+\psi(0))^2(\phi(0)-\psi(0))^2\le 2(\|\phi\|_U^2+\|\psi\|_U^2)\|\phi-\psi\|^2_U. 
\] 
Then, by Corollary 5.1.2 in \cite{fil:2001}, 
\[
 I_2\le C(\|\phi\|_U^2+\|\psi\|_U^2)\|\phi-\psi\|_U^2. 
\]
Further, straightforward estimates and Proposition \ref{prop:2.1} (i) yield 
\begin{align*}
 I_3&\le 5\int_0^{\infty}\bigg\{|\phi^{\prime\prime}(x)|^2\left|\int_0^x(\phi(y)-\psi(y))dy\right|^2 
  + |\phi^{\prime\prime}(x)-\psi^{\prime\prime}(x)|^2\left|\int_0^x\psi(y)dy\right|^2 \\ 
  &\quad + (\phi^{\prime}(x)+\psi^{\prime}(x))^2(\phi^{\prime}(x)-\psi^{\prime}(x))^2 
   + 4|\phi(x)|^2(\phi^{\prime}(x)-\psi^{\prime}(x))^2 \\ 
  &\quad + 4|\psi^{\prime}(x)|^2(\phi(x)-\psi(x))^2\bigg\}w(x)dx \\
  &\le 5\|\phi\|_U^2\|\phi-\psi\|_{L^1(\mathbb{R}_+)}^2 
       + 5\|\phi-\psi\|^2_U\|\psi\|_{L^1(\mathbb{R}_+)}^2 
       + 5\|\phi+\psi\|_U^2\|\phi-\psi\|_{L^{\infty}(\mathbb{R}_+)}^2 \\ 
   &\quad   + 20\|\phi\|_U^2\|\phi^{\prime}-\psi^{\prime}\|^2_{L^{\infty}(\mathbb{R}_+)} 
    + 20\|\psi\|_U^2\|\phi-\psi\|_{L^{\infty}(\mathbb{R}_+)}^2 \\
  &\le C(\|\phi\|_U^2+\|\psi\|_U^2)\|\phi-\psi\|_U^2.
\end{align*}
Therefore we have 
\[
 \|\mathcal{S}\phi-\mathcal{S}\psi\|_U^2\le C(\|\phi\|_U^2+\|\psi\|_U^2)\|\phi-\psi\|_U^2. 
\]
This together with the assumptions for $\sigma_i$ leads to 
the Lipschitz continuity of $\alpha$. 
\end{proof}

Then, by Theorem 7.2 in Da Prato and Zabczyk \cite{dap-zab:2014}, for a given $r_0\in U$, 
there exists a unique $U$-valued predictable process $r(t)=r(t,\cdot)$, $t\in\mathbb{R}_+$,  
that is a mild solution to 
\begin{equation}
\label{eq:2.2}
\left\{
\begin{aligned}
 dr(t) &= (Ar(t) + \alpha(t,r(t)))dt + \sum_{i=1}^{d}\sigma_i(t,r(t))dW_i(t), \\
 r(0)&=r_0. 
\end{aligned}
\right.
\end{equation}
Moreover $\{r(t)\}_{t\ge 0}$ has a continuous modification and satisfies 
\begin{equation}
\label{eq:2.3}
\sup_{0\le t\le T}\mathbb{E}\|r(t)\|_U^2\le C_T(1+\|r_0\|_U^2)
\end{equation}
for some positive constant $C_T$ for any $T>0$. 
Therefore, for $t\in\mathbb{R}_+$, 
\begin{equation}
\label{eq:2.4}
 r(t) = S(t)r_0 + \int_0^tS(t-s)\alpha(s,r(s))ds + \sum_{i=1}^{d}
 \int_0^tS(t-s)\sigma_i(s,r(s))dW_i(s), \quad\text{a.s.}
\end{equation}

Now, let $P(t,T)$ be the price at time $t$ of a discounted bond with maturity $T\ge t$. 
We assume that  
\begin{equation*}
 P(t,T)=\exp\left(-\int_0^{T-t}r(t,x)dx\right), \quad 0\le t\le T<\infty.
\end{equation*}
Then the process $f(t,T):=r(t,T-t)$, $0\le t\le T<\infty$, 
satisfies 
\[
 f(t,T)= -\frac{\partial}{\partial T}\log P(t,T), \quad 0\le t\le T<\infty,  
\]
and so is interpreted as the forward rate process. 
If we set by abuse of notation 
$\sigma_i(t,T,\omega)=\sigma_i(t,\omega,r(t))(T-t)$ and 
$\alpha(t,T,\omega)=\alpha(t,\omega,r(t))(T-t)$, then by \eqref{eq:2.4}, 
\begin{align*}
 f(t,T) &= S(t)r_0(T-t) + \int_0^tS(t-s)\alpha(s,s+T-t)ds \\ 
  &\quad + \sum_{i=1}^d\int_0^t S(t-s)\sigma_i(s,s+T-t)dW_i(s) \\ 
  &= r_0(T) + \int_0^t\alpha(s,T)ds + \sum_{i=1}^d\int_0^t\sigma_i(s,T)dW_i(s). 
\end{align*}
This is nothing but an HJM model for the forward rate.  
Further, let $\{B(t)\}_{t\in\mathbb{R}_+}$ be the bank account process defined by 
\begin{equation*}
 B(t) = \exp\left(\int_0^t r(s,0)ds\right), \quad t\in\mathbb{R}_+. 
\end{equation*}
Then, since the definition of $\alpha$ excludes arbitrage opportunities, 
$\mathbb{P}$ is an equivalent local martingale measure, i.e., 
the process $\{P(t,T)/B(t)\}_{0\le t\le T}$ is a local martingale under $\mathbb{P}$ for any 
$T>0$. 
Consequently, the infinite dimensional SDE \eqref{eq:2.2} leads to 
a risk-neutral modeling of interest rate processes.

\section{Collocation methods for HJMM equations}\label{sec:3}

In this section, we describe an approximation method for the equation \eqref{eq:2.2} based on 
the kernel-based interpolation theory, and derive its error bound.  

Let $\Phi: \mathbb{R}\to \mathbb{R}$ be a radial and 
positive definite function, 
i.e., $\Phi(\cdot)=\phi(|\cdot|)$ for some $\phi:[0,\infty)\to\mathbb{R}$ and 
for every $\ell\in\mathbb{N}$, for all pairwise distinct 
$y_1,\ldots, y_{\ell}\in\mathbb{R}$ and for all 
$\alpha=(\alpha_i)\in\mathbb{R}^{\ell}\setminus\{0\}$, we have 
\begin{equation*}
  \sum_{i,j=1}^{\ell}\alpha_i\alpha_j\Phi(y_i-y_j)>0. 
\end{equation*}
Then, by Theorems 10.10 and 10.11 in Wendland \cite{wen:2010}, 
there exists a unique Hilbert space $\mathcal{N}_{\Phi}(\mathbb{R})$ with 
norm $\|\cdot\|_{\mathcal{N}_{\Phi}(\mathbb{R})}$, called the 
native space, of real-valued functions on $\mathbb{R}$ such that 
$\Phi$ is a reproducing kernel for $\mathcal{N}_{\Phi}(\mathbb{R})$. 

Let $\Gamma=\{x_1,\cdots,x_N\}$ be a finite subset of $(0,\infty)$ 
such that $0<x_1<\cdots <x_N$ and 
put $K=\{\Phi(x_i-x_j)\}_{1\le i,j\le N}$. 
Then $K$ is invertible and thus for any $g:\mathbb{R}_+\to\mathbb{R}$ the function
\begin{equation*}
I(g)(x) = \sum_{j=1}^N(K^{-1}g|_{\Gamma})_j\Phi(x-x_j), \quad x\in\mathbb{R}_+, 
\end{equation*}
interpolates $g$ on $\Gamma$. 

We adopt the so-called Wendland kernel for $\Phi$, 
which is defined as follows: for a given $\tau\in\mathbb{N}\cup\{0\}$,  
set the function $\Phi_{\tau}$ satisfying 
$\Phi_{\tau}(x)=\phi_{\tau}(|x|)$, $x\in\mathbb{R}^d$, where 
\begin{equation*}
 \phi_{\tau}(r)=\int_r^{\infty}r_{\tau}\int_{r_{\tau}}^{\infty}r_{\tau-1}\int_{r_{\tau-1}}^{\infty}
 \cdots\: r_2\int_{r_2}^{\infty}r_1\max\{1-r_1,0\}^{\nu}dr_1dr_2\cdots dr_{\tau}, \quad 
 r\ge 0 
\end{equation*}
for $\tau\ge 1$ and $\phi_{\tau}(|x|)=\max\{1-r,0\}^{\tau+1}$ for $\tau=0$ 
with $\nu =\tau+1$.  
Then, it follows from Theorems 9.12 and 9.13 in \cite{wen:2010} that 
the function $\phi_{\tau}$ is represented as 
\[
 \phi_{\tau}(r)=
 \begin{cases}
  p_{\tau}(r), & 0\le r\le 1, \\
  0, & r>1, 
 \end{cases}
\]
where $p_{\tau}$ is a univariate polynomial with degree $\nu + 2\tau$ having representation 
\begin{equation}
\label{eq:3.1}
 p_{\tau}(r)=\sum_{j=0}^{\nu+2\tau}d_{j,\tau}^{(\nu)}r^j. 
\end{equation}
The coefficients in \eqref{eq:3.1} are given by 
\begin{align*}
 d_{j,0}^{(\nu)} &= (-1)^j\frac{\nu!}{j!(\nu-j)!}, \quad 0\le j\le \ell, \\
 d_{0,s+1}^{(\nu)} &= \sum_{j=0}^{\nu+2s}\frac{d_{j,s}^{(\nu)}}{j+2}, \quad 
 d_{1,s+1}^{(\nu)} =0, \quad s\ge 0, \\
 d_{j,s+1}^{(\nu)} &= -\frac{d_{j-2,s}^{(\nu)}}{j}, \quad s\ge 0, \quad 2\le j\le\nu+2s+2, 
\end{align*} 
in a recursive way for $0\le s\le \tau-1$.  
Further, it is known that 
\begin{equation*}
 \phi_{\tau}(r)\doteq  
\begin{cases}
 \displaystyle\int_r^1 s(1-s)^{\tau+2}(s^2-r^2)^{\tau-1}ds, & 0\le r\le 1, \\[1em]
 0, & r>1, 
\end{cases}
\end{equation*}
where $\doteq$ denotes equality up to a positive constant factor 
(see Chernih et.al \cite{che-slo-wom:2014}). 
For example, 
\begin{align*}
 \phi_{2}(r)&\doteq \max\{1-r,0\}^5(8r^2+5r+1), \\
 \phi_{3}(r)&\doteq \max\{1-r,0\}^7(21r^3 + 19r^2 + 7r + 1),  \\
 \phi_{4}(r) &\doteq  \max\{1-r,0\}^9(384r^4 + 453r^3 + 237r^2 + 63r + 7). 
\end{align*}

The function $\Phi_{\tau}$ is $C^{2\tau}$ on $\mathbb{R}$, and 
the native space $\mathcal{N}_{\Phi_{\tau}}(\mathbb{R})$ 
coincides with $H^{\tau+1}(\mathbb{R})$ 
where $H^{\theta}(\mathbb{R})$ is the Sobolev space on $\mathbb{R}$ of order 
$\theta\ge 0$ based on $L^2$-norm. 
Further, the native space norm $\|\cdot\|_{\mathcal{N}_{\Phi}(\mathbb{R})}$ and 
the Sobolev norm $\|\cdot\|_{H^{\tau+1}(\mathbb{R})}$ are equivalent.

In what follows, we fix $\tau\in\mathbb{N}$ and $\Phi=\Phi_{\tau}$. 
Further we assume that $\Gamma\subset (0,R)$ for some $R>0$. 
Since we can expect that 
\begin{gather*}
 r(t,x) \approx I(r(t,\cdot))(x)
 =\sum_{j=1}^N(K^{-1}r(t,\cdot)|_{\Gamma})_{j}\Phi(x-x_{j}), \\ 
 \frac{\partial}{\partial x}r(t,x)\approx \sum_{j=1}^N(K^{-1}r(t))_j\Phi^{\prime}(x-x_j), 
\end{gather*}
possibly we have 
\[
 dr(t)\simeq \left\{AI(r(t))+\alpha(t,I(r(t)))\right\}dt 
  + \sum_{i=1}^d \sigma_i(t,I(r(t)))dW_i(t). 
\]
Notice that the right-hand side in the equality just above allows for a finite dimensional 
realization. Let us now assume that $\Gamma$ and $R$ are described by a single parameter 
$h>0$. 
Let $0=t_0<t_1<\cdots<t_n=T$ and denote by $r^h(t_k,x)$, $k=0,\ldots,n$, 
$x\in\mathbb{R}_+$, the process obtained by Euler-Maruyama approximation of the 
SDE above, i.e., 
\begin{equation*}
\begin{aligned}
 r^h(t_{k+1},x)&=r^h(t_k,x) + \left\{\frac{d}{dx}I(r^h(t_k))(x) + \alpha(t_k,I(r^h(t_k))(x)\right\} 
 \Delta t_{k+1}  \\ 
  &\quad + \sum_{i=1}^d\sigma_i(t_k,I(r^h(t_k)))(x)\Delta W_i(t_{k+1}), 
 \quad k=0,\ldots,n-1, \\
 r^h(t_0,x) &= r_0(x), 
\end{aligned}
\end{equation*}
where $\Delta t_{k+1}=t_{k+1}-t_k$ and $\Delta W_i(t_{k+1})=W_i(t_{k+1})-W_i(t_k)$. 
For $t\in [t_k,t_{k+1}]$ we set  
\begin{align*}
 r^h(t,x) &= r^h(t_k,x) + \left\{\frac{d}{dx}I(r^h(t_k))(x) + \alpha(t_k,I(r^h(t_k))(x)\right\}(t-t_k) \\ 
  &\quad + \sum_{i=1}^d\sigma_i(t_k,I(r^h(t_k)))(x)(W_i(t)-W_i(t_k)).  
\end{align*}

Now denote $I(v)(x)=\sum_{j=1}^N(K^{-1}v)_j\Phi(x-x_j)$ 
for $v\in\mathbb{R}^N$ by an abuse of notation, and set 
\begin{align*}
 \alpha(t,v) &= (\alpha(t,I(v))(x_1),\ldots,\alpha(t,I(v))(x_N))^{\mathrm{T}}, \\ 
 \sigma(t,v) &=\{\sigma_i(t,I(v))(x_{j})\}_{{1\le j\le N}\atop{1\le i\le d}}\in\mathbb{R}^{N\times d},
 \quad (t,v)\in [0,T]\times\mathbb{R}^N. 
\end{align*}
Also, notice that by setting 
$K_1=\{\Phi^{\prime}(x_j-x_{\ell})\}_{1\le j,\ell\le N}$, we obtain 
$I^{\prime}(\phi)(x_j)= (K_1K^{-1}\phi|_{\Gamma})_j$. 
Then, $r^h_k:=(r^h(t_k,x_1),\ldots,r^h(t_k,x_N))^{\mathsf{T}}\in\mathbb{R}^N$, 
$k=0,\ldots,n$, is given by 
\[
 r^h_{k+1} = r^h_k + (K_1K^{-1}r_k^h + \alpha(t_k,r_k^h))\Delta t_{k+1} 
  + \sigma(t_k,r^h_k)\Delta W(t_{k+1})
\]
with $r^h_0 = (r_0(x_1),\ldots,r_0(x_N))^{\mathsf{T}}$, 
which is the Euler-Maruyama 
approximation of the $N$-dimensional stochastic differential equation 
\[
\left\{
\begin{aligned}
 d\tilde{r}(t) &= \left[K_1K^{-1}\tilde{r}(t) + \alpha(t, \tilde{r}(t))\right]dt 
  + \sigma(t,\tilde{r}(t))dW(t), \\
  \tilde{r}(0)&=(r_0(x_1),\ldots,r_0(x_N))^{\mathsf{T}}, 
\end{aligned}
\right.
\]
Furthermore, suppose that 
we compute $r^h(t_k,\cdot)$ at points in $\Gamma_e=\{\xi_1,\ldots,\xi_M\}\subset [0,\infty)$. 
Then, $\tilde{r}^h_k:=(r^h(t_k,\xi_1),\ldots,r^h(t_k,\xi_M))^{\mathsf{T}}$, 
$k=0,\ldots,n$, is given by 
\[
 \tilde{r}^h_{k+1} = \tilde{r}^h_k + (K_{1e}K^{-1}r_k^h + \tilde{\alpha}(t_k,r_k^h))\Delta t_{k+1} 
  + \tilde{\sigma}(t_k,r^h_k)\Delta W(t_{k+1})
\]
 with $\tilde{r}^h_0=(r_0(\xi_1),\ldots,r_0(\xi_M))^{\mathsf{T}}$, where 
$K_{1e}=\{\Phi^{\prime}(\xi_j-x_{\ell})\}_{1\le j\le M,\;1\le\ell\le N}$, 
\begin{align*}
 \tilde{\alpha}(t,v) &= (\alpha(t,I(v))(\xi_1),\ldots,
   \alpha(t,I(v))(\xi_M))^{\mathrm{T}}, \\ 
 \tilde{\sigma}(t,v) &=\{\sigma_i(t,I(v))(\xi_{j})\}_{{1\le j\le M}\atop{1\le i\le d}}
  \in\mathbb{R}^{M\times d},
 \quad (t,v)\in [0,T]\times\mathbb{R}^N. 
\end{align*}

The rest of this section is devoted to the proof of a convergence of the approximation above. 
To this end, we impose the following conditions on $r_0$ and $\sigma_i$'s: 
\begin{assum}
\label{assum:3.1}
\begin{enumerate}[\rm (i)]
\item The function $r_0$ belongs to $U\cap C_b^2(\mathbb{R}_+)$. 
\item The function $\sigma(t,\omega,\phi)$ is $C^2$ on $\mathbb{R}_+$ for any 
$t\in\mathbb{R}_+$, $\omega\in\Omega$, and $\phi\in U$. 
\item There exist 
a nonnegative and Borel measurable function $\Psi$ on $\mathbb{R}_+$ with  
$\Psi^2 w\in L^1(\mathbb{R}_+)$, 
$\lim_{x\to\infty}\Psi(x)=0$ and a positive constant $T_0$ such that 
for $i=1,\ldots,d$, $t,x\in\mathbb{R}_+$, $\omega\in \Omega$, and $m=0,1,2$, 
\[
 \left|\frac{d^m}{dx^m}\sigma_i(t,\omega,\phi)(x)\right|\le \Psi(x), 
\] 
and that for $i=1,\ldots,d$, $t,s,x\in\mathbb{R}_+$, 
$\omega\in \Omega$, $m=0,1,2$, and $\phi,\psi\in U$,  
\[
 \left|\frac{d^m}{dx^m}\sigma_i(t,\omega,\phi)(x) - 
  \frac{d^m}{dx^m}\sigma_i(s,\omega,\psi)(x)\right| 
 \le \Psi(x)\sqrt{|t-s|} + \Psi(x)\int_0^{T_0}|\phi(y)-\psi(y)|dy. 
\]
\end{enumerate}
\end{assum}
Notice that Assumption \ref{assum:2.2} holds under Assumption \ref{assum:3.1} since 
$|\phi(y)-\psi(y)|\le C\|\phi-\psi\|_U$ by Proposition \ref{prop:2.1} (i).  
Thus there exists a unique $U$-valued predictable process $\{r(t)\}_{t\ge 0}$ satisfying 
\eqref{eq:2.3} and \eqref{eq:2.4}. 
Then, set 
\[
 \Delta t =\max_{1\le i\le n}(t_i-t_{i-1}), \quad 
 \Delta x = \sup_{x\in (0,R)}\min_{j=1,\ldots,N}|x-x_j|. 
\]
Since we have assumed that $\Gamma$, and $R$ are functions of $h$, so is 
$\Delta x$. Moreover we assume that 
$\{t_k\}_{k=0}^n$ is also a function of $h$. Then so is $\Delta t$. 

For $j=1,\ldots,N$, we write $Q_j$ for the cardinal function defined by 
\[
 Q_j(x) = \sum_{i=1}^N(K^{-1})_{ij}\Phi(x-x_i), \quad x\in\mathbb{R}, \;\; j=1,\ldots,N. 
\] 
In what follows, $\#\mathcal{K}$ denotes the cardinality of a finite set $\mathcal{K}$. 
\begin{assum}
\label{assum:3.2}
\begin{enumerate}[\rm (i)]
\item The parameters $\Delta t$, $R$, $N$, and $\Delta x$ satisfy 
$\Delta t\to 0$, $R\to \infty$, $N\to\infty$, and $\Delta x\to 0$ as $h\searrow 0$. 
\item There exist $c_1,c_2,c_3$, positive constants independent of $h$, 
such that for $m=0,1,2$, 
\[
 \max_{x\in\Gamma\cup\Gamma_e}\#\left\{j\in\{1,\ldots,N\}: 
  \left|\frac{d^mQ_j}{dx^m}(x)\right|> \frac{c_1}{N}\right\} 
 \le c_2R^{1/2}\le c_3(\Delta x)^{-(\tau-3/2)}.  
\]
\end{enumerate}
\end{assum}


\begin{rem}
\label{rem:3.2}
Suppose that $\Gamma$ is quasi-uniform in the sense that 
\[
 c_4 RN^{-1}\le \Delta x\le c_5 RN^{-1} 
\]
hold for some positive constants $c_4,c_5$. 
In this case,  
a sufficient condition for which the latter inequality in Assumption \ref{assum:3.1} (ii) holds is  
\begin{equation*}
 R\le c_6 N^{(2\tau-3)/(2\tau-2)}
\end{equation*}
for some positive constant $c_6$. 
\end{rem}

\begin{thm}
\label{thm:3.5}
Suppose that Assumptions {\rm \ref{assum:3.1}} and {\rm \ref{assum:3.2}} hold. 
Moreover assume that $\tau\ge 3$. 
Then there exists $h_0\in (0,1]$ such that 
for any $T\in (0,\infty)$ we have 
\[
 \mathbb{E}|r(t,x)-r^h(t,x)|^2\le C\Delta t 
 +C(\Delta x)^{(2\tau-1)/\tau}R^{1/(2\tau)}, \quad x\in\Gamma\cup\Gamma_e, 
    \;\; 0\le t\le T, \;\; h\le h_0. 
\]
\end{thm}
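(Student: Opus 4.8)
The plan is to derive a one-step mean-square recursion for the error at the grid points and then close it with a discrete Gronwall inequality. Writing the exact solution in the It\^o form implied by the mild representation \eqref{eq:2.4}, over a time step $[t_k,t_{k+1}]$ one has, using $S(t_{k+1}-s)\phi(x)=\phi(x+t_{k+1}-s)$ together with a first-order Taylor expansion of the shift, a relation of the form $r(t_{k+1},x)=r(t_k,x)+\partial_x r(t_k,x)\Delta t_{k+1}+\alpha(t_k,r(t_k))(x)\Delta t_{k+1}+\sum_i\sigma_i(t_k,r(t_k))(x)\Delta W_i(t_{k+1})+\mathcal R_k(x)$, where $\mathcal R_k(x)$ collects the Euler--Maruyama consistency remainders (the second-order term of the shift and the frozen-coefficient errors of the drift and diffusion integrals). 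Subtracting the scheme and writing $e_k(x)=r(t_k,x)-r^h(t_k,x)$ produces a recursion whose driving terms are (a) the derivative discrepancy $\partial_x r(t_k,x)-\tfrac{d}{dx}I(r^h(t_k))(x)$, (b) the drift discrepancy $\alpha(t_k,r(t_k))(x)-\alpha(t_k,I(r^h(t_k)))(x)$, (c) the diffusion discrepancies for the $\sigma_i$, and (d) the remainder $\mathcal R_k$.

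Each of (a)--(c) I would split as (exact-solution interpolation error) $+$ (propagated error). For (a), writing $\tfrac{d}{dx}I(g)(x)=\sum_j g(x_j)Q_j'(x)$ and $I(r(t_k))-I(r^h(t_k))=\sum_j e_k(x_j)Q_j$, the discrepancy equals $[\partial_x r(t_k,x)-\tfrac{d}{dx}I(r(t_k))(x)]+\sum_j e_k(x_j)Q_j'(x)$, and analogously for (b) and (c), where Assumption \ref{assum:3.1}(iii) lets me bound the $\sigma_i$-differences pointwise by $\Psi(x)\int_0^{T_0}|r(t_k,y)-I(r^h(t_k))(y)|\,dy$ and then split the integrand the same way. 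The first, interpolation-error pieces are controlled by the low-regularity interpolation estimate developed earlier, applied to $r(t,\cdot)$, whose $C_b^2$-regularity I would inherit from Assumption \ref{assum:3.1}(i),(ii) by differentiating \eqref{eq:2.4} twice in $x$; these pieces produce exactly the spatial term $C(\Delta x)^{(2\tau-1)/\tau}R^{1/(2\tau)}$. The second, propagated-error pieces $\sum_j e_k(x_j)Q_j^{(m)}(x)$ are controlled by the stability of the cardinal functions: Assumption \ref{assum:3.2}(ii) guarantees that at most $c_2R^{1/2}$ of the $Q_j^{(m)}(x)$ exceed $c_1/N$, so a Cauchy--Schwarz splitting into the few ``large'' and the many ``small'' indices bounds these sums in terms of the summed error $\sum_j\mathbb E|e_k(x_j)|^2$ with a manageable constant.

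For the time-discretization remainder $\mathcal R_k$ I would use Assumption \ref{assum:3.1}(iii): the temporal $\tfrac12$-H\"older bound $\Psi(x)\sqrt{|t-s|}$ and the spatial Lipschitz bound control the frozen-coefficient errors, while the a priori moment bound \eqref{eq:2.3} and the $C_b^2$-regularity of $r(t,\cdot)$ control the second-order shift term. Taking expectations, the frozen-coefficient drift and diffusion errors and the second-order shift term are each of order at most $\Delta t^2$ per step in mean square, so that summing over the $O(1/\Delta t)$ steps yields the term $C\Delta t$.

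The main obstacle is the stability of the propagated first-derivative term $\sum_j e_k(x_j)Q_j'(x)$, i.e. the action of the differentiation matrix $K_1K^{-1}$: its operator norm grows like $R^{1/2}$, so a crude bound would give a Gronwall growth factor $(1+C\Delta t\,R^{1/2})^{T/\Delta t}$ that blows up as $R\to\infty$. To avoid this I would exploit the approximate dissipativity of $K_1K^{-1}$ inherited from that of $A=\partial_x$ (recall $\langle\phi,\phi'\rangle\le 0$ when $\phi(\infty)=0$, which underlies the contractivity of the shift semigroup $S(t)$), this being precisely the stability estimate for Wendland interpolation proved in \cite{nak:2019a} and quantified through Assumption \ref{assum:3.2}(ii). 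This turns the one-step inequality into $\mathcal E_{k+1}\le(1+C\Delta t)\mathcal E_k+C\Delta t(\Delta t+(\Delta x)^{(2\tau-1)/\tau}R^{1/(2\tau)})$ for the summed error $\mathcal E_k=\sum_j\mathbb E|e_k(x_j)|^2$, once $h_0$ is chosen small enough that the residual $(\Delta t)^2\|K_1K^{-1}\|^2$-type contributions are absorbed into the $(1+C\Delta t)$ factor. A discrete Gronwall inequality then closes the estimate at the points of $\Gamma$, and the same cardinal-function stability, now applied with $K_{1e}$, transfers the bound to the evaluation points $\Gamma_e$, giving the claim on $\Gamma\cup\Gamma_e$.
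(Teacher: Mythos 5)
Your overall architecture (one-step error recursion, split into interpolation error of the exact solution plus propagated error plus Euler consistency remainder, then discrete Gronwall) is a reasonable discrete-time counterpart of the paper's argument, which instead rewrites $r^h$ as a mild solution for the perturbed semigroup $\tilde S(t)$ generated by $\tilde A=AI$ and applies a continuous Gronwall inequality to $t\mapsto\max_{x\in\Gamma\cup\Gamma_e}\mathbb{E}|r(t,x)-r^h(t,x)|^2$. But there is a genuine gap at the step you yourself identify as the main obstacle. You assert that the sup-norm of the differentiation operator $\sum_j e_k(x_j)Q_j'(x)$ grows like $R^{1/2}$ and propose to rescue the Gronwall constant by an ``approximate dissipativity'' of $K_1K^{-1}$ inherited from $\langle\phi,\phi'\rangle\le 0$. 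No such dissipativity is proved in the paper or in \cite{nak:2019a}, and it is not what Assumption \ref{assum:3.2}(ii) encodes: RBF differentiation matrices are in general not dissipative in any norm relevant to your recursion, and the $L^2(\mathbb{R}_+)$ inner-product inequality for $\partial_x$ does not transfer to the collocation matrix. What the paper actually uses (Lemma \ref{lem:3.6}, via Lemma 3.5 of \cite{nak:2019a} and the counting condition in Assumption \ref{assum:3.2}(ii)) is that $\max_{x\in\Gamma\cup\Gamma_e}\sum_{j}|d^mQ_j/dx^m(x)|$ is bounded \emph{uniformly in $h$} for $m=0,1,2$; hence $\|\tilde A\|_{\Gamma\cup\Gamma_e}$ is uniformly bounded in the discrete sup norm, the one-step growth factor is $(1+C\Delta t)$ with no $R$-dependence, and no dissipativity is needed. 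Your premise that the operator norm grows like $R^{1/2}$ is therefore incorrect in this framework, and the remedy you substitute for the correct stability lemma is unsubstantiated; as written, the argument does not close.

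A secondary, smaller gap: you bound the interpolation error of the exact solution by claiming $r(t,\cdot)\in C_b^2$ ``by differentiating \eqref{eq:2.4} twice in $x$.'' This needs a proof (uniform second moments of $\|r(t)\|_{C^2}$, including two $x$-derivatives of the stochastic convolution), and the paper is structured precisely to avoid it: the $m=1$ interpolation estimate of Lemma \ref{lem:3.7}(ii) is applied only to $r_0$, $\alpha(s,r(s))$ and $\sigma_i(s,r(s))$, which are $C^2_b$ directly by Assumption \ref{assum:3.1}, while the exact solution itself enters only through the $m=0$ estimate (valid for $C^1_b$ functions, i.e.\ for $r(s)\in U$) inside the Lipschitz terms $\mathcal{I}_3$ and $\mathcal{I}_6$. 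If you repair the stability step by invoking Lemma \ref{lem:3.6} and reroute the spatial error through the coefficients rather than through $r(t,\cdot)$ itself, your discrete-time scheme of proof becomes a viable alternative to the paper's semigroup comparison.
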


To prove Theorem \ref{thm:3.5}, we need several preliminary lemmas. 
First, recall from \cite{nak:2019a} that for any $f:\mathbb{R}\to\mathbb{R}$, 
\begin{equation*}
 I(f)(x)=\sum_{j=1}^N(K^{-1}f|_{\Gamma})_j\Phi(x-x_j)
 = \sum_{j=1}^Nf(x_j)Q_j(x), \quad x\in\mathbb{R}. 
\end{equation*}
We use the stability results for kernel-based interpolations as in \cite{nak:2019a}. 
\begin{lem}
\label{lem:3.6}
Suppose that Assumption {\rm \ref{assum:3.2}} and $\tau\ge 3$ hold. Then, 
there exists $h_0\in (0,1]$ such that   
\begin{equation*}
 \sup_{0<h\le h_0}\max_{x\in\Gamma\cup\Gamma_e}
  \sum_{j=1}^N\left|\frac{d^mQ_j}{dx^m}(x)\right|<\infty, 
  \quad m=0,1,2.  
\end{equation*}
\end{lem}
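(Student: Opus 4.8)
The plan is to reduce the claim to a Lebesgue-constant-type estimate and to exploit the cardinality control built into Assumption~\ref{assum:3.2}~(ii). Fix $m\in\{0,1,2\}$ and a point $x\in\Gamma\cup\Gamma_e$. Since $Q_j(x)=\sum_{i=1}^N(K^{-1})_{ij}\Phi(x-x_i)$, differentiation gives $\frac{d^mQ_j}{dx^m}(x)=\sum_{i=1}^N(K^{-1})_{ij}\Phi^{(m)}(x-x_i)$, so the quantity to be bounded is $\sum_{j=1}^N\bigl|\frac{d^mQ_j}{dx^m}(x)\bigr|$, uniformly in $x$ and in $h\le h_0$. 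The natural decomposition, dictated by the threshold $c_1/N$ in Assumption~\ref{assum:3.2}~(ii), is to split the index set as $J_1(x)=\{j:|\frac{d^mQ_j}{dx^m}(x)|\le c_1/N\}$ and $J_2(x)=\{1,\dots,N\}\setminus J_1(x)$, and to treat the two parts separately.

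The first part is immediate: since $\#J_1(x)\le N$ and each term is at most $c_1/N$, we get $\sum_{j\in J_1(x)}\bigl|\frac{d^mQ_j}{dx^m}(x)\bigr|\le c_1$, uniformly in $x$ and $h$. The whole difficulty is concentrated in the second part. Here I would invoke the stability estimates for Wendland-kernel interpolation established in \cite{nak:2019a}. The hypothesis $\tau\ge 3$ enters in two ways: it guarantees that $\mathcal{N}_{\Phi_\tau}(\mathbb{R})=H^{\tau+1}(\mathbb{R})\hookrightarrow C^2(\mathbb{R})$, so that the pointwise derivatives $\Phi^{(m)}$, $m=0,1,2$, and hence $\frac{d^mQ_j}{dx^m}$, are meaningful and controlled; and it produces the positive exponent $\tau-3/2>0$ appearing in the cardinality bound $\#J_2(x)\le c_2R^{1/2}\le c_3(\Delta x)^{-(\tau-3/2)}$. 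The point of the argument is to combine this quantitative cardinality bound with the pointwise cardinal-derivative estimate of \cite{nak:2019a} so that the contribution $\sum_{j\in J_2(x)}\bigl|\frac{d^mQ_j}{dx^m}(x)\bigr|$ stays bounded uniformly in $x$ and $h$; the constant $h_0\in(0,1]$ is then fixed so small that $\Delta x$ is small and $R$ large enough for the estimates of \cite{nak:2019a} to be in force. Combining the two parts and taking the maximum over $x\in\Gamma\cup\Gamma_e$ and the supremum over $0<h\le h_0$ yields the assertion simultaneously for $m=0,1,2$.

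I expect the main obstacle to be precisely this second part. The subtlety is that no uniform pointwise bound on the individual $\bigl|\frac{d^mQ_j}{dx^m}(x)\bigr|$ can by itself suffice, because the cardinality $\#J_2(x)$ may grow like $R^{1/2}\to\infty$ as $h\searrow0$; indeed, already the trivial case $m=0$, $x=x_j\in\Gamma$ shows that a cardinal function can be of order one where $\#J_2(x)$ is small, while at generic $x\in\Gamma_e$ many coefficients may be simultaneously non-negligible. Thus one cannot simply bound the large part by (worst-case count) times (worst-case size); rather, one must use the sharp order of the cardinal-derivative estimate from \cite{nak:2019a} in tandem with the exact quantitative cardinality bound of Assumption~\ref{assum:3.2}~(ii), checking that the two exponents balance so that the product, and not merely each factor, is finite uniformly in $h$. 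Verifying this matching of orders, supported by the support and decay properties of the Wendland kernel $\Phi_\tau$ that localize the interpolant, is the crux of the argument.
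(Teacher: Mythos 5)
Your decomposition of the index set by the threshold $c_1/N$, and the observation that the ``small'' part $J_1(x)$ contributes at most $c_1$, is a reasonable reconstruction of how such a stability bound is proved. But the paper does not carry out this argument at all: its proof consists only of translating the collocation points by $R/2$, so that $\tilde{\Gamma}=\{x_j-R/2\}\subset(-R/2,R/2)$ and the interpolant is unchanged, and then citing Lemma~3.5 of \cite{nak:2019a} verbatim, which is stated for symmetric intervals and already contains the conclusion. So the entire analytic content lives in the cited lemma, and your proposal attempts to reopen that black box.

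The problem is that you stop exactly at the step that constitutes the lemma. For the ``large'' part you write that one should ``combine the quantitative cardinality bound with the pointwise cardinal-derivative estimate of \cite{nak:2019a}'' and ``check that the two exponents balance,'' but you never state what that pointwise estimate is, nor verify the balance --- indeed you explicitly concede that the naive bound (worst-case count)$\times$(worst-case size) may fail, since $\#J_2(x)$ can grow like $R^{1/2}\to\infty$. Without an actual bound on $\max_{j\in J_2(x)}\bigl|\tfrac{d^mQ_j}{dx^m}(x)\bigr|$ (or on $\sum_{j\in J_2(x)}\bigl|\tfrac{d^mQ_j}{dx^m}(x)\bigr|$ via, say, Cauchy--Schwarz together with spectral estimates for $K^{-1}$ in terms of the separation distance, which is where $\tau\ge 3$ and the exponent $\tau-3/2$ in Assumption~\ref{assum:3.2}~(ii) must actually be used), the sum over $J_2(x)$ is uncontrolled and the proof does not close. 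As written, the proposal establishes only the trivial half of the estimate and replaces the decisive half by an expectation; either supply that estimate explicitly or do what the paper does and reduce, via the translation $\tilde{x}_j=x_j-R/2$, to the already-proved Lemma~3.5 of \cite{nak:2019a}.
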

\begin{proof}
We write $\tilde{x}_j=x_j-R/2$ for $j=1,\ldots,N$ and consider 
$\tilde{\Gamma}=\{\tilde{x}_1,\ldots,\tilde{x}_N\}\subset (-R/2,R/2)$. 
With this collocation points we have 
$K=\{\Phi(\tilde{x}_i-\tilde{x}_j)\}_{i,j=1,\ldots,N}$ and 
$I(g)(x+R/2)=\sum_{j=1}^N(K^{-1}\tilde{g}|_{\tilde{\Gamma}})_j\Phi(x-\tilde{x}_j)$, 
where $\tilde{g}(x)=g(x+R/2)$ for $x\in\mathbb{R}$. 
Then we can apply Lemma 3.5 in \cite{nak:2019a} to obtain the required result. 
\end{proof}

\begin{lem}
\label{lem:3.7}
Suppose that Assumption {\rm \ref{assum:3.2}} and $\tau\ge 3$ hold. 
Let $h_0$ as in Lemma {\rm \ref{lem:3.6}}. Then for $h\in (0,h_0]$ 
we have 
\begin{enumerate}[\rm (i)]
\item for $u\in H^{\tau+1}(\mathbb{R})$ 
\[
  \left\|\frac{d}{dx}(u - I(u))\right\|_{L^{\infty}([0,R])}\le C(\Delta x)^{\tau-1}
  \|u\|_{H^{\tau+1}(\mathbb{R})}; 
\]
\item for $m=0,1$ and $u\in C^{1+m}_b(\mathbb{R}_+)$
\[
 \max_{x\in\Gamma\cup\Gamma_e}\left|\frac{d^m}{dx^m}(u-I(u))(x)\right|
 \le C\|u\|_{C^{1+m}(\mathbb{R}_+)}
 (\Delta x)^{(\tau+1/2-m)/(\tau+1-m)}R^{1/(2(\tau+1-m)}.  
\]
\end{enumerate}
\end{lem}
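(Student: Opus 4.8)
The plan is to prove the two estimates separately, treating (i) as a native-space interpolation bound and (ii) as an ``escaping the native space'' argument built on top of (i) and the stability Lemma \ref{lem:3.6}. For (i), since $u\in H^{\tau+1}(\mathbb{R})=\mathcal{N}_{\Phi_\tau}(\mathbb{R})$ with equivalent norms, I would invoke the standard pointwise interpolation error estimate for positive definite kernels (Wendland \cite{wen:2010}). Because the interval $[0,R]$ satisfies an interior cone condition with parameters independent of $R$, and because $\Phi_\tau$ is translation invariant with compact support, the error at any $x\in[0,R]$ is controlled by a local sampling inequality on a ball of fixed radius, with an $R$-independent constant. Applied to the first derivative (order $k=1$, dimension $d=1$, native smoothness $\tau+1$) this yields
\[
 \left\|\frac{d}{dx}(u-I(u))\right\|_{L^\infty([0,R])}\le C(\Delta x)^{(\tau+1)-1-1/2}\|u\|_{H^{\tau+1}(\mathbb{R})}
 =C(\Delta x)^{\tau-1/2}\|u\|_{H^{\tau+1}(\mathbb{R})}.
\]
Since $h\le h_0$ forces $\Delta x\le 1$ and $\tau-1/2\ge\tau-1$, the claimed weaker bound with exponent $\tau-1$ follows at once.

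For (ii) the point is that $u$ lies only in $C^{1+m}_b$, hence not in the native space, so I would interpolate it against a mollified surrogate. Extend $u$ to $\mathbb{R}$ with comparable $C^{1+m}$ norm, set $u_\epsilon=u*\rho_\epsilon$ for a standard mollifier $\rho_\epsilon$, and localize by a cutoff $\chi$ equal to $1$ on a neighborhood of $[0,R]\supset\Gamma\cup\Gamma_e$ to obtain $\tilde u_\epsilon=\chi u_\epsilon\in H^{\tau+1}(\mathbb{R})$; since $\tilde u_\epsilon$ agrees with $u_\epsilon$ and its derivatives on the evaluation points, $I(u_\epsilon)=I(\tilde u_\epsilon)$ there. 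Then split, at each $x\in\Gamma\cup\Gamma_e$,
\[
 (u-I(u))^{(m)}(x)=(u-u_\epsilon)^{(m)}(x)+(u_\epsilon-I(u_\epsilon))^{(m)}(x)+\bigl(I(u_\epsilon-u)\bigr)^{(m)}(x).
\]
The first term is bounded by $\|u^{(m)}-u_\epsilon^{(m)}\|_\infty\le C\epsilon\|u\|_{C^{1+m}(\mathbb{R}_+)}$. The last equals $\sum_j(u-u_\epsilon)(x_j)Q_j^{(m)}(x)$, hence is at most $\|u-u_\epsilon\|_\infty\sum_j|Q_j^{(m)}(x)|\le C\epsilon\|u\|_{C^{1+m}(\mathbb{R}_+)}$ by Lemma \ref{lem:3.6}. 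The middle term is handled by the same native-space estimate underlying (i), now applied to the $m$-th derivative of $\tilde u_\epsilon$, giving $\le C(\Delta x)^{\tau+1/2-m}\|\tilde u_\epsilon\|_{H^{\tau+1}(\mathbb{R})}$; the key computation here is $\|\tilde u_\epsilon\|_{H^{\tau+1}(\mathbb{R})}\le CR^{1/2}\epsilon^{-(\tau-m)}\|u\|_{C^{1+m}(\mathbb{R}_+)}$, where the factor $R^{1/2}$ comes from integrating the bounded derivatives of $u_\epsilon$ over an interval of length $O(R)$ and $\epsilon^{-(\tau-m)}$ from the $\tau-m$ extra mollifier derivatives needed to reach Sobolev order $\tau+1$ from the $C^{1+m}$ regularity of $u$.

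Collecting the three terms gives, uniformly over $x\in\Gamma\cup\Gamma_e$,
\[
 |(u-I(u))^{(m)}(x)|\le C\|u\|_{C^{1+m}(\mathbb{R}_+)}\bigl(\epsilon+(\Delta x)^{\tau+1/2-m}R^{1/2}\epsilon^{-(\tau-m)}\bigr).
\]
Optimizing by choosing $\epsilon=\bigl((\Delta x)^{\tau+1/2-m}R^{1/2}\bigr)^{1/(\tau+1-m)}$ balances the two contributions and produces exactly $C\|u\|_{C^{1+m}(\mathbb{R}_+)}(\Delta x)^{(\tau+1/2-m)/(\tau+1-m)}R^{1/(2(\tau+1-m))}$, the asserted bound.

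The main obstacle is the middle term. I expect the delicate points to be establishing the native-space estimate on the growing domain $[0,R]$ with an $R$-independent interpolation constant (which the locality and translation invariance of the compactly supported $\Phi_\tau$ should supply), and tracking precisely how $\|\tilde u_\epsilon\|_{H^{\tau+1}(\mathbb{R})}$ scales jointly in $\epsilon$ and $R$. It is exactly this joint dependence that, after the $\epsilon$-optimization, converts the integer-order native estimate into the stated fractional power of $\Delta x$ together with the $R^{1/(2(\tau+1-m))}$ factor, so the bookkeeping of these two scales is where the argument must be carried out carefully.
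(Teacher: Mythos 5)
Your proposal is correct and follows essentially the same route as the paper: part (i) via the native-space/sampling estimate for $H^{\tau+1}(\mathbb{R})=\mathcal{N}_{\Phi_\tau}(\mathbb{R})$ (the paper cites Lemma 3.4 of \cite{nak:2019a} after translating the points to $(-R/2,R/2)$), and part (ii) via the same extend--mollify--cutoff construction, the same three-term splitting with the third term controlled by the stability bound of Lemma \ref{lem:3.6}, the same estimate $\|\hat u_{\varepsilon}\|_{H^{\tau+1}(\mathbb{R})}\le CR^{1/2}\varepsilon^{-(\tau-m)}\|u\|_{C^{1+m}(\mathbb{R}_+)}$, and the same optimization over $\varepsilon$. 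The only cosmetic difference is that you derive the exponent $\tau-1/2$ in (i) from the general theory and then relax it to $\tau-1$, whereas the paper quotes the external lemma directly.
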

\begin{proof}
As in the proof of the previous lemma, we translate the approximation region and the 
set of collocation points to $(-R/2,R/2)$ and $\tilde{\Gamma}$, respectively. Then 
applying Lemma 3.4 in \cite{nak:2019a} to $\tilde{u}(x):=u(x+R/2)$, $x\in\mathbb{R}$, 
we obtain the claim (i). 

To show the claim (ii), let $u\in C^{1+m}_b(\mathbb{R}_+)$. 
We define an extension $\tilde{u}$ on $\mathbb{R}$ of $u$ by 
\[
 \tilde{u}(x) = 
 \begin{cases}
  u(x)\zeta(x), & x\ge 0, \\
  (u(0)+ u^{\prime}(0)x + (m/2)(d^{m}u^{\prime}/dx^{m})(0)x^2)\zeta(x), & x<0, 
 \end{cases}
\]
where $\zeta$ is $C^{\infty}$-function on $\mathbb{R}$ such that $0\le \zeta\le 1$, 
$\zeta = 1$ on $(-\delta,\infty)$, and $\zeta = 0$ on $(-\infty,-2\delta)$, for some 
fixed $\delta>0$. 
Then it is straightforward to see that 
$\tilde{u}$ is $C^{1+m}_b(\mathbb{R})$ such that 
$d^{\kappa}\tilde{u}/dx^{\kappa}=d^{\kappa}u/dx^{\kappa}$ on $\mathbb{R}_+$ 
for $0\le \kappa\le 1+m$ and 
$\|\tilde{u}\|_{C^{1+m}(\mathbb{R})}\le C\|u\|_{C^{1+m}(\mathbb{R}_+)}$. 
Further, take a $C^{\infty}$-function $\rho$ with a compact support and unit integral, 
and set $\rho_{\varepsilon}(x)=(1/\varepsilon)\rho(x/\varepsilon)$ for 
$x\in\mathbb{R}$ and $\varepsilon>0$. 
With this mollifier and the function $\tilde{u}$, we define $u_{\varepsilon}$ by 
\[
 u_{\varepsilon}(x) = \int_{-\infty}^{\infty}\tilde{u}(y)\rho_{\varepsilon}(x-y)dy, 
 \quad x\in\mathbb{R}. 
\]
This function satisfies 
\begin{equation}
\label{eq:3.2} 
\begin{aligned}
 &\sup_{x\in\mathbb{R}}\left|\frac{d^m}{dx^m}(\tilde{u}(x)-u_{\varepsilon})(x)\right|
   \le C\|u\|_{C^{1+m}(\mathbb{R}_+)}\varepsilon, \\ 
 &\sup_{x\in\mathbb{R}}\left|\frac{d^{\kappa}}{dx^{\kappa}}u_{\varepsilon}(x)\right| 
 \le C\varepsilon^{-\max\{\kappa-1-m,0\}}\|u\|_{C^{1+m}(\mathbb{R}_+)}, 
 \quad \kappa\in\mathbb{N}\cup\{0\}. 
\end{aligned}
\end{equation}
We take another $C^{\infty}$-function $\zeta_1$ on $\mathbb{R}$ such that 
$0\le\zeta_1\le 1$ on $\mathbb{R}$, $\zeta_1(x)=1$ for $|x|\le 1$, and 
$\zeta_1(x)=0$ for $|x|>1+c$ for some $c>0$.  
Then consider the function $\hat{u}_{\varepsilon}(x):=u_{\varepsilon}(x)\zeta_1(x/R)$, 
$x\in\mathbb{R}$. Trivially, $\hat{u}_{\varepsilon}\in H^{\tau+1}(\mathbb{R})$ and 
by \eqref{eq:3.2}, 
\[
 \|\hat{u}_{\varepsilon}\|_{H^{\tau+1}(\mathbb{R})}^2
 \le C\sum_{\kappa=0}^{\tau+1}\int_{-(1+c)R}^{(1+c)R}
 \left|\frac{d^{\kappa}}{dx^{\kappa}}u_{\varepsilon}(x)
  \right|^2dx 
 \le CR\varepsilon^{-2(\tau-m)}\|u\|^2_{C^{1+m}(\mathbb{R}_+)}. 
\]
From this estimates and applying Lemma 3.6 in \cite{nak:2019a} to $\hat{u}_{\varepsilon}$, 
we have 
\begin{align*}
 \sup_{0\le x\le R}\left|\frac{d^m}{dx^m}(\hat{u}_{\varepsilon}-I(\hat{u}_{\varepsilon}))(x)\right| 
 &\le C(\Delta x)^{\tau+1/2-m}\|\hat{u}_{\varepsilon}\|_{H^{\tau+1}(\mathbb{R})} \\ 
 &\le C(\Delta x)^{\tau+1/2-m}R^{1/2}\varepsilon^{-(\tau-m)}\|u\|_{C^2(\mathbb{R}_+)}. 
\end{align*}
This together with Lemma \ref{lem:3.6} leads to
\begin{align*}
 &\left|\frac{d^m}{dx^m}(u-I(u))(x)\right| \\ 
 &\le \left|\frac{d^m}{dx^m}(\tilde{u}-u_{\varepsilon})(x)\right| 
  + \left|\frac{d^m}{dx^m}(\hat{u}_{\varepsilon}-I(\hat{u}_{\varepsilon}))(x)\right| 
  + \left|\frac{d^m}{dx^m}(I(u_{\varepsilon})-I(\tilde{u}))(x)\right| \\ 
 &\le C\|u\|_{C^2(\mathbb{R}_+)}\varepsilon 
    + C(\Delta x)^{\tau+1/2-m}R^{1/2}\varepsilon^{-(\tau-m)}\|u\|_{C^2(\mathbb{R}_+)} 
\end{align*}
for all $x\in\Gamma\cup\Gamma_e$. 
Minimizing the right-hand side in the last inequality just above over $\varepsilon>0$, 
we obtain the claim (ii). 
\end{proof}

\begin{lem}
\label{lem:3.8}
Suppose that the assumptions imposed in Theorem {\rm \ref{thm:3.5}} hold. 
Let $h_0$ as in Lemma {\rm \ref{lem:3.6}}. Then, 
\[
 \sup_{0<h\le h_0}\max_{x\in\Gamma\cup\Gamma_e}\max_{k=0,\ldots,n}
  \mathbb{E}|r^h(t_k,x)|^2 < \infty. 
\]
\end{lem}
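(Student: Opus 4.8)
The plan is to derive a one-step recursive inequality for
\[
 M_k := \max_{x\in\Gamma\cup\Gamma_e}\mathbb{E}|r^h(t_k,x)|^2, \qquad k=0,\ldots,n,
\]
of the shape $M_{k+1}\le (1+C\Delta t_{k+1})M_k + C\Delta t_{k+1}$ with $C$ independent of $h$, $k$ and $x$, and then to close the argument by discrete Gronwall. The whole difficulty is to make every constant uniform in $h$, and the reason this is possible is the stability bound of Lemma \ref{lem:3.6}.

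First I would fix $x\in\Gamma\cup\Gamma_e$ and split the one-step update of the scheme into its $\mathcal{F}(t_k)$-measurable part and its martingale part,
\[
 a_k(x) = r^h(t_k,x) + D_k(x)\Delta t_{k+1}, \qquad
 D_k(x) = \sum_{j=1}^N r^h(t_k,x_j)Q_j^{\prime}(x) + \alpha(t_k,I(r^h(t_k)))(x),
\]
and $b_k(x)=\sum_{i=1}^d\sigma_i(t_k,I(r^h(t_k)))(x)\Delta W_i(t_{k+1})$, where I have used the cardinal representation $\frac{d}{dx}I(r^h(t_k))(x)=\sum_j r^h(t_k,x_j)Q_j^{\prime}(x)$. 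Here one first checks that $I(r^h(t_k))$ really lies in $U$: it is a finite linear combination of translates of the compactly supported $C^{2\tau}$ kernel $\Phi_{\tau}$, each of which has finite $U$-norm since $w$ is locally bounded. Because $r^h(t_k,\cdot)$ and the coefficients are $\mathcal{F}(t_k)$-measurable while $\Delta W_i(t_{k+1})$ has mean zero and is independent of $\mathcal{F}(t_k)$, the cross term $\mathbb{E}[a_k(x)b_k(x)]$ vanishes and $\mathbb{E}|b_k(x)|^2=\sum_{i=1}^d\mathbb{E}|\sigma_i(t_k,I(r^h(t_k)))(x)|^2\Delta t_{k+1}$.

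The two boundedness inputs I would use are: (a) $\sigma_i$ and $\alpha$ are \emph{uniformly bounded as functions on $\mathbb{R}_+$}, since $\|\sigma_i(t,\omega,\phi)\|_U\le C_1$ and $\|\alpha(t,\omega,\phi)\|_U\le C_2$ combine with the embedding $\|\cdot\|_{L^{\infty}(\mathbb{R}_+)}\le C\|\cdot\|_U$ of Proposition \ref{prop:2.1} (i); and (b) $\sum_{j=1}^N|Q_j^{\prime}(x)|\le C$ uniformly in $x\in\Gamma\cup\Gamma_e$ and in $h\in(0,h_0]$ by Lemma \ref{lem:3.6}. Input (a) gives $\mathbb{E}|b_k(x)|^2\le C\Delta t_{k+1}$ and controls the $\alpha$-contribution to $D_k$, while input (b) handles the interpolation-derivative term through a weighted Cauchy--Schwarz inequality,
\[
 \mathbb{E}\Big|\sum_{j=1}^N r^h(t_k,x_j)Q_j^{\prime}(x)\Big|^2
 \le \Big(\sum_{j=1}^N|Q_j^{\prime}(x)|\Big)\sum_{j=1}^N\mathbb{E}|r^h(t_k,x_j)|^2\,|Q_j^{\prime}(x)|
 \le CM_k,
\]
where the last step uses $\mathbb{E}|r^h(t_k,x_j)|^2\le M_k$ for $x_j\in\Gamma$. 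Hence $\mathbb{E}|D_k(x)|^2\le CM_k+C$. Expanding $\mathbb{E}|a_k(x)|^2$ by $|p+q\Delta t|^2\le(1+\Delta t)|p|^2+(\Delta t+\Delta t^2)|q|^2$ and absorbing $\Delta t_{k+1}^2\le T\Delta t_{k+1}$, I obtain $\mathbb{E}|r^h(t_{k+1},x)|^2\le (1+C\Delta t_{k+1})M_k+C\Delta t_{k+1}$, and taking the maximum over $x$ yields the desired recursion.

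Finally, since $r^h(t_0,\cdot)=r_0$ is deterministic and, by Assumption \ref{assum:3.1} (i), $r_0\in C_b^2(\mathbb{R}_+)$, we have $M_0\le \|r_0\|_{L^{\infty}(\mathbb{R}_+)}^2<\infty$; the discrete Gronwall inequality together with $\sum_{j=1}^n\Delta t_j=T$ then gives $M_k\le e^{CT}(M_0+CT)$ for all $k$, a bound independent of $h$ and $k$, which is exactly the claim. The main obstacle is precisely the interpolation-derivative term $\sum_{j}r^h(t_k,x_j)Q_j^{\prime}(x)$: a naive estimate of it grows as $N\to\infty$, and only the $h$-uniform stability estimate of Lemma \ref{lem:3.6} keeps the constant in the recursion bounded.
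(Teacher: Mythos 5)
Your proof is correct and follows essentially the same route as the paper: expand the square of the one-step Euler--Maruyama update, use the $\mathcal{F}(t_k)$-measurability of the drift and diffusion coefficients to annihilate the Brownian cross terms, control the interpolation-derivative term via the $h$-uniform stability bound $\sum_j|Q_j'(x)|\le C$ from Lemma~\ref{lem:3.6} together with the boundedness of $\alpha$ and $\sigma_i$, and close with a discrete Gronwall recursion. Your write-up is merely more explicit than the paper's in a few places (the weighted Cauchy--Schwarz step and the final Gronwall iteration, which the paper leaves implicit).
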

\begin{proof}
Fix $k=0,1,\ldots,n-1$ and $x\in\Gamma\cup\Gamma_e$. We use the representation 
\begin{align*}
 |r^h(t_{k+1},x)|^2 &= |r^h(t_k,x)|^2 + \Lambda(t_k,x)^2(\Delta t_{k+1})^2 + 
 \left(\sum_{i=1}^d\Theta_i(t_k,x)\Delta W_i(t_{k+1})\right)^2 \\ 
 &\quad + 2r^h(t_k,x)\Lambda(t_k,x)\Delta t_{k+1} + 2\Lambda(t_k,x)\Delta t_{k+1}
 \sum_{i=1}^d\Theta_i(t_k,x)\Delta W_i(t_{k+1}) \\ 
 &\quad + 2r^h(t_k,x)\sum_{i=1}^d\Theta_i(t_k,x)\Delta W_i(t_{k+1}), 
\end{align*}
where for $i=1,\ldots,d$ and $x\in\mathbb{R}_+$ 
\begin{align*}
 \Lambda(t_k,x) &= \frac{d}{dx}I(r^h(t_k))(x) + \alpha(t_k,I(r^h(t_k))(x), \\
 \Theta_i(t_k,x) &= \sigma_i(t_k,I(r^h(t_k)))(x). 
\end{align*}
Using Lemma \ref{lem:3.6} and Assumption \ref{assum:3.1}, we see 
\begin{equation}
\label{eq:3.3}
 \mathbb{E}[\Lambda(t_k,x)^2(\Delta t_{k+1})^2] 
 + \left|\mathbb{E}r^h(t_k,x)\Lambda(t_k,x)\right|\Delta t_{k+1} 
 \le C\left(1+\max_{y\in\Gamma\cup\Gamma_e}\mathbb{E}|r^h(t_k,y)|^2\right)\Delta t. 
\end{equation}
Since $\Lambda(t_k,x)$ and $\Theta_i(t_k,x)$ are $\mathcal{F}_{t_k}$-measurable, 
for $i=1,\ldots,d$ and $j\neq i$,  
\begin{equation}
\label{eq:3.4}
\begin{aligned}
 \mathbb{E}[\Lambda(t_k,x)\Theta_i(t_k,x)\Delta W_i(t_{k+1})] 
 &=\mathbb{E}[r^h(t_k,x)\Theta_i(t_k,x)\Delta W_i(t_{k+1})] \\
 &=\mathbb{E}[\Theta_i(t_k,x)\Theta_j(t_k,x)\Delta W_i(t_{k+1})\Delta W_j(t_{k+1})] \\ 
 &=0. 
\end{aligned}
\end{equation}
Moreover we obtain 
\[
 \mathbb{E}[\Theta_i(t_k,x)^2(\Delta W_i(t_{k+1}))^2] 
 =\mathbb{E}[\Theta_i(t_k,x)^2]\Delta t_{k+1} \le C\Delta t, \quad i=1,\ldots,d. 
\]
From this, \eqref{eq:3.3} and \eqref{eq:3.4} we deduce that 
\[
 \mathbb{E}|r^h(t_{k+1},x)|^2 
 \le (1+C\Delta t)\max_{y\in\Gamma\cup\Gamma_e}\mathbb{E}|r^h(t_k,y)|^2 + C\Delta t, 
\]
which leads to the required result. 
\end{proof}

We denote $\tilde{A}=AI$, i.e., $\tilde{A}\phi(x) =I^{\prime}(\phi)(x)$ for $\phi\in U$ and 
$x\in\mathbb{R}_+$. Then, since $\Phi$ is supported in the unit ball, 
\begin{align*}
 \|\tilde{A}\phi\|^2_U &= |I^{\prime}(\phi)(0)|^2 + |I^{\prime\prime}(\phi)(0)|^2  
 + \int_0^{R+1}\left\{I^{\prime\prime}(\phi)(x)^2 
 +I^{\prime\prime\prime}(\phi)(x)\right\}^2 w(x)dx \\ 
 &\le C_h\max_{j=1,\ldots,N}|\phi(x_j)|^2 
 \le C_h\|\phi\|_U^2 
\end{align*}
for some positive constant $C_h$ depending on $h$. 
Thus $\tilde{A}:U\to U$ is bounded, whence 
there exists a uniformly continuous semigroup 
$\tilde{S}$ on $U$ such that its generator is given by $\tilde{A}$.  

\begin{lem}
\label{lem:3.9}
Suppose that Assumption {\rm \ref{assum:3.2}} and $\tau\ge 3$ hold. 
Let $h_0$ as in Lemma {\rm \ref{lem:3.6}}. Then 
for $h\in (0,h_0]$, $T\in (0,\infty)$ and $\phi\in U\cap C^2_b(\mathbb{R}_+)$ we have 
\[
 \max_{x\in\Gamma\cup\Gamma_e}|S(t)\phi(x)-\tilde{S}(t)\phi(x)|\le 
 C(\Delta x)^{(\tau-1/2)/\tau}R^{1/(2\tau)}\|\phi\|_{C^2(\mathbb{R}_+)}, 
 \quad 0\le t\le T. 
\]
\end{lem}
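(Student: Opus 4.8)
The plan is to reduce the comparison of the two semigroups to a finite system of linear ODEs for the nodal values, and then to control that system using the $h$-uniform stability of Lemma \ref{lem:3.6} together with the interpolation error bound of Lemma \ref{lem:3.7}. Fix $x\in\Gamma\cup\Gamma_e$ and set $e(t,x)=S(t)\phi(x)-\tilde{S}(t)\phi(x)$. Since $\phi\in C^2_b(\mathbb{R}_+)$, the map $t\mapsto S(t)\phi(x)=\phi(t+x)$ is $C^1$ with derivative $\phi^{\prime}(t+x)=(S(t)\phi)^{\prime}(x)$; since $\tilde{A}$ is bounded on $U$ and pointwise evaluation is continuous by Proposition \ref{prop:2.1} (i), the map $t\mapsto\tilde{S}(t)\phi(x)$ is $C^1$ with derivative $\tilde{A}\tilde{S}(t)\phi(x)=I^{\prime}(\tilde{S}(t)\phi)(x)$. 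Writing $(S(t)\phi)^{\prime}(x)=I^{\prime}(S(t)\phi)(x)+g_t(x)$ with $g_t:=(A-\tilde{A})S(t)\phi=\frac{d}{dx}(S(t)\phi-I(S(t)\phi))$, and using the cardinal representation $I^{\prime}(v)(x)=\sum_{j}v(x_j)Q_j^{\prime}(x)$, I would obtain the pointwise relation
\[
 \frac{d}{dt}e(t,x)=\sum_{j=1}^N e(t,x_j)Q_j^{\prime}(x)+g_t(x),\qquad e(0,x)=0 .
\]

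The structural key is that the right-hand side depends on $e(t,\cdot)$ \emph{only} through its values at the nodes $x_1,\dots,x_N$. Restricting to $x=x_i\in\Gamma$ therefore yields a closed $N$-dimensional linear system $\dot{\mathbf{e}}(t)=M\mathbf{e}(t)+\mathbf{g}_t$ with $M_{ij}=Q_j^{\prime}(x_i)$ and $\mathbf{e}(0)=0$, whose variation-of-constants solution I would estimate in the $\ell^{\infty}$-norm via $\|\mathbf{e}(t)\|_\infty\le\int_0^t e^{(t-s)L}\|\mathbf{g}_s\|_\infty\,ds$, where $L:=\sup_{0<h\le h_0}\max_{x\in\Gamma\cup\Gamma_e}\sum_j|Q_j^{\prime}(x)|<\infty$ bounds $\|M\|_{\ell^\infty\to\ell^\infty}=\max_i\sum_j|Q_j^{\prime}(x_i)|$ by Lemma \ref{lem:3.6}. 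The forcing is handled by Lemma \ref{lem:3.7} (ii) with $m=1$ applied to $u=S(s)\phi$: because $(S(s)\phi)^{(m)}(x)=\phi^{(m)}(s+x)$ gives $\|S(s)\phi\|_{C^2(\mathbb{R}_+)}\le\|\phi\|_{C^2(\mathbb{R}_+)}$, one has $\|\mathbf{g}_s\|_\infty\le\max_{x\in\Gamma}|g_s(x)|\le C\|\phi\|_{C^2(\mathbb{R}_+)}(\Delta x)^{(\tau-1/2)/\tau}R^{1/(2\tau)}$. Combining these with $\int_0^t e^{(t-s)L}\,ds\le(e^{LT}-1)/L$ for $t\le T$ produces the claimed bound at the nodes of $\Gamma$.

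It remains to propagate the estimate to the external nodes. For $\xi\in\Gamma_e$ the relation integrates to $e(t,\xi)=\int_0^t\bigl[\sum_j e(s,x_j)Q_j^{\prime}(\xi)+g_s(\xi)\bigr]\,ds$, and bounding the sum by $L\max_j|e(s,x_j)|$ (Lemma \ref{lem:3.6} again, now at $\xi\in\Gamma_e$) using the nodal estimate just obtained, together with $|g_s(\xi)|\le C\|\phi\|_{C^2(\mathbb{R}_+)}(\Delta x)^{(\tau-1/2)/\tau}R^{1/(2\tau)}$, gives the same order for $|e(t,\xi)|$. Taking the maximum over $\Gamma\cup\Gamma_e$ then finishes the argument.

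The main obstacle—and the only place where uniformity in $h$ could break down—is the Gronwall constant. A priori $\|\tilde{A}\|_{U\to U}\le C_h$ blows up as $h\searrow0$, so the crude estimate $\|\tilde{S}(t)\|_{U\to U}\le e^{tC_h}$ is useless; what saves the proof is that the relevant matrix $M$ lives on the nodal values and its $\ell^\infty$-operator norm is exactly the quantity $\max_i\sum_j|Q_j^{\prime}(x_i)|$ kept bounded, uniformly in $h$, by Lemma \ref{lem:3.6}. Once this $h$-uniform stability is in place, the remainder is a routine combination of variation of constants with the derivative interpolation error of Lemma \ref{lem:3.7}, which supplies precisely the exponent $(\tau-1/2)/\tau$ on $\Delta x$ and $1/(2\tau)$ on $R$.
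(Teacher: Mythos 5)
Your proof is correct and follows essentially the same route as the paper: both split the evolution of the error into $\tilde{A}$ acting on the error plus the consistency term $(A-\tilde{A})S(t)\phi$, control the former uniformly in $h$ via Lemma \ref{lem:3.6} (the paper packages this as $\sup_h\|\tilde{A}\|_{\Gamma\cup\Gamma_e}<\infty$) and the latter via Lemma \ref{lem:3.7}~(ii) with $m=1$. The only difference is presentational — you solve the closed nodal ODE system by variation of constants and then propagate to $\Gamma_e$, while the paper applies Gronwall's lemma directly to $t\mapsto\max_{x\in\Gamma\cup\Gamma_e}|S(t)\phi(x)-\tilde{S}(t)\phi(x)|$ in the integrated identity.
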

\begin{proof}
Let $\phi\in U\cap C^2_b(\mathbb{R}_+)$ be fixed. 
Since $\{S(t)\}_{0\le t\le T}$ and $\{\tilde{S}(t)\}_{0\le t\le T}$ are both $C_0$-semigroups 
and the pointwise evaluation operator is bounded on $U$, 
we have 
\[
 S(t)\phi(x) - \tilde{S}(t)\phi(x) 
 = \int_0^t\left\{AS(\tau)\phi(x)-\tilde{A}\tilde{S}(\tau)\phi(x) 
 \right\}d\tau, \quad 0\le t\le T. 
\]
Thus, 
\begin{align*}
 \max_{x\in\Gamma\cup\Gamma_e}|S(t)\phi(x)-\tilde{S}(t)\phi(x)| 
 &\le \|\tilde{A}\|_{\Gamma\cup\Gamma_e}
   \int_0^t\max_{x\in\Gamma\cup\Gamma_e}|S(\tau)\phi(x)-\tilde{S}(\tau)\phi(x)|d\tau \\ 
 &\quad + \int_0^t\max_{x\in\Gamma\cup\Gamma_e}|AS(\tau)\phi(x)-\tilde{A}S(\tau)\phi(x)|d\tau, 
\end{align*}
where for $\mathcal{S}:U\to U$, 
\begin{align*}
 \|\mathcal{S}\|_{\Gamma\cup\Gamma_e} 
 = \sup\left\{\max_{x\in\Gamma\cup\Gamma_e}|\mathcal{S}\psi(x)|/
  \max_{x\in\Gamma\cup\Gamma_e}|\psi(x)| : 
\psi\in U, \; \max_{x\in\Gamma\cup\Gamma_e}|\psi(x)| > 0\right\}. 
\end{align*} 
Lemmas \ref{lem:3.6} and \ref{lem:3.7} now 
imply that $\sup_{h}\|\tilde{A}\|_{\Gamma\cup\Gamma_e}$ is finite, and 
\begin{align*}
 \max_{x\in\Gamma\cup\Gamma_e}|AS(\tau)\phi(x)-\tilde{A}S(\tau)\phi(x)|
 &\le C(\Delta x)^{(\tau-1/2)/\tau}R^{1/(2\tau)}\|S(\tau)\phi\|_{C^2(\mathbb{R}_+)} \\ 
 &\le C(\Delta x)^{(\tau-1/2)/\tau}R^{1/(2\tau)}\|\phi\|_{C^2(\mathbb{R}_+)}, 
 \quad 0\le\tau\le T. 
\end{align*}
Thus by Gronwall's lemma, the lemma follows. 
\end{proof}

\begin{proof}[Proof of Theorem {\rm \ref{thm:3.5}}]
First notice that $r^h(t,x)$ can be written as 
\begin{align*}
 r^h(t,x) &= r_0(x) +\int_0^t\left\{\tilde{A}r^h(s,x)+\alpha(s,I(r^h(s)))(x)\right\}ds 
  + \sum_{i=1}^d\int_0^t\sigma_i(s,I(r^h(s)))(x)dW_i(s) \\
  &\quad - \sum_{i=0}^d\int_0^t\Theta_i(s,x)dW_i(s), 
\end{align*}
where $W_0(t)=t$ and 
\begin{align*}
 \Theta_0(s,x)&=\sum_{k=0}^{n-1}\left\{\frac{d}{dx}I(r^h(s)-r^h(t_k))(x)
  + \alpha(s,I(r^h(s)))(x) - \alpha(t_k,I(r^h(t_k)))(x)\right\}1_{(t_k,t_{k+1}]}(s), \\ 
 \Theta_i(s,x) &= \sum_{k=0}^{n-1}\left(\sigma_i(s,I(r^h(s)))-\sigma_i(t_k,I(r^h(t_k)))(x)\right)
  1_{(t_k,t_{k+1}]}(s), \quad i=1,\ldots,d.  
\end{align*}
This means that $r^h$ is a mild solution to the corresponding equation, whence 
\begin{equation}
\begin{aligned}
 r^h(t) &= \tilde{S}(t)r_0 + \int_0^t\tilde{S}(t-s)\left(\alpha(s,I(r^h(s)))-\Theta_0(s)\right)ds \\ 
  &\quad + \sum_{i=1}^d\int_0^t\tilde{S}(t-s)\left(\sigma_i(s,I(r^h(s)))-\Theta_i(s)\right)dW_i(s). 
\end{aligned}
\end{equation}
Thus 
\begin{align*}
 &r(t)-r^h(t) \\ 
 &= (S(t)-\tilde{S}(t))r_0 + \int_0^t\big\{S(t-s)\alpha(s,r(s)) 
  - \tilde{S}(t-s)\alpha(s,I(r^h(s))) + \tilde{S}(t-s)\Theta_0(s)\big\}ds \\ 
 &\quad + \sum_{i=1}^d\int_0^t\big\{S(t-s)\sigma_i(s,r(s))-\tilde{S}(t-s)\sigma_i(s,I(r^h(s))) 
  + \tilde{S}(t-s)\Theta_i(s)\big\}dW_i(s),  
\end{align*}
and so, for a fixed $x\in\Gamma\cup\Gamma_e$, 
\begin{equation}
\label{eq:3.6}
 \mathbb{E}|r(t,x)-r^h(t,x)|^2 \le C(\mathcal{I}_1+\mathcal{I}_2 +\mathcal{I}_3 
  +\mathcal{I}_4 +\mathcal{I}_5+\mathcal{I}_6), 
\end{equation}
where
\begin{align*}
 \mathcal{I}_1 &=|S(t)r_0(x)-\tilde{S}(t)r_0(x)|^2, \\
 \mathcal{I}_2 &=\mathbb{E}\int_0^t\left|S(t-s)\alpha(s,r(s))(x)
   -\tilde{S}(t-s)\alpha(s,r(s))(x)\right|^2ds, \\ 
 \mathcal{I}_3 &=\mathbb{E}\int_0^t\left|\tilde{S}(t-s)\alpha(s,r(s))(x)
   -\tilde{S}(t-s)\alpha(s,I(r(s)))(x)\right|^2ds, \\ 
 \mathcal{I}_4 & =\mathbb{E}\int_0^t\left|\tilde{S}(t-s)\alpha(s,I(r(s)))(x)
   -\tilde{S}(t-s)\alpha(s,I(r^h(s)))(x)\right|^2ds, \\
 \mathcal{I}_5 &= \mathbb{E}\int_0^t\left|\tilde{S}(t-s)\Theta_0(s)\right|^2ds, \\ 
 \mathcal{I}_6 &= \sum_{i=1}^d\mathbb{E}\int_0^t\left|S(t-s)\sigma_i(s,r(s))(x) 
  -\tilde{S}(t-s)\sigma_i(s,I(r^h(s)))(x)- \tilde{S}(t-s)\Theta_i(s)\right|^2ds. 
\end{align*}
By Lemma \ref{lem:3.9}, 
\begin{equation}
\label{eq:3.7}
 \max_{y\in\Gamma\cup\Gamma_e}|S(t)r_0(y)-\tilde{S}(t)r_0(y)|^2\le 
 C(\Delta x)^{(2\tau-1)/\tau}R^{1/\tau}. 
\end{equation}
Using Assumption \ref{assum:3.1}, we observe, for $t,s\in [0,T]$ and $\phi,\psi\in U$,  
\begin{equation}
\label{eq:3.8}
\begin{aligned}
 \|\alpha(t,\phi)\|_{C^2(\mathbb{R}_+)}&\le C, \\ 
 \sup_{y\in\mathbb{R}_+}\|\alpha(t,\phi)(y)-\alpha(s,\psi)(y)|&\le C
  \sqrt{|t-s|} + C\int_0^{T_0}|\phi(y)-\psi(y)|dy.
\end{aligned}  
\end{equation}
From $\sup_h\|\tilde{A}\|_{\Gamma\cup\Gamma_e}<\infty$, 
we find 
\begin{equation}
\label{eq:3.9}
 \sup_{0\le\tau\le T}\sup_{0<h\le h_0}\|\tilde{S}(\tau)\|_{\Gamma\cup\Gamma_e}<\infty. 
\end{equation}
Also, since $r(s)\in C^1_b(\mathbb{R}_+)$, we can apply Lemma \ref{lem:3.7} to obtain  
\begin{equation}
\label{eq:3.10}
 |r(s)(x)-I(r(s))(x)|\le C\|r(s)\|_U(\Delta x)^{(\tau-1/2)/\tau}R^{1/(2\tau)}, 
 \quad x\in\Gamma\cup\Gamma_e, \;\; 0\le s\le T. 
\end{equation}
By Lemmas \ref{lem:3.6} and \ref{lem:3.9}, 
\eqref{eq:3.8}--\eqref{eq:3.10} and \eqref{eq:2.3}, we see 
\begin{equation}
\label{eq:3.11}
 \mathcal{I}_2 \le C(\Delta x)^{(2\tau-1)/\tau}R^{1/\tau}, 
\end{equation}
\begin{equation}
\label{eq:3.12}
\mathcal{I}_3 \le C \int_0^t\|\tilde{S}(t-s)\|_{\Gamma\cup\Gamma_e}\int_0^{T_0}\mathbb{E}
|r(s,y)-I(r(s))(y)|^2 dy ds \le C(\Delta x)^{(2\tau-1)/\tau}R^{1/\tau}, 
\end{equation}
and 
\begin{equation}
\label{eq:3.13}
\begin{aligned}
\mathcal{I}_4&\le C\int_0^t\|\tilde{S}(t-s)\|^2_{\Gamma\cup\Gamma_e}
 \int_0^{T_0}\mathbb{E}|I(r(s)-r^h(s))(y)|^2dyds \\
 &\le C\int_0^t\max_{y\in\Gamma\cup\Gamma_e}\mathbb{E}|r(s,y)-r^h(s,y)|^2 ds. 
\end{aligned}
\end{equation}
Further, by \eqref{eq:3.9},  
\begin{align*}
 &\int_0^t|\tilde{S}(t-s)\Theta_0(s)|^2ds \\ 
 &=\sum_{k=0}^{n-1}\int_{t_k\wedge t}^{t_{k+1}\wedge t}|\tilde{S}(t-s)\Theta_0(s)|^2ds \\
 &=\sum_{k=0}^{n-1}\int_{t_k\wedge t}^{t_{k+1}\wedge t} 
    \left|\tilde{S}(t-s)(\alpha(s,I(r^h(s)))(x)-\alpha(t_k,I(r^h(t_k)))(x)\right|^2ds \\
 &\le C\sup_{0\le\tau\le T}\|\tilde{S}(\tau)\|_{\Gamma\cup\Gamma_e}
    \sum_{k=0}^{n-1}\int_{t_k\wedge t}^{t_{k+1}\wedge t} 
    \left\{s-t_k + \int_0^{T_0}|I(r^h(s))(y)-I(r^h(t_k))(y)|^2dy\right\}ds. 
\end{align*}
Taking the expectation, we obtain 
\begin{align*}
 \mathbb{E}\int_0^t|\tilde{S}(t-s)\Theta_0(s)|^2ds 
 &\le C\Delta t + C\sum_{k=0}^{n-1}\int_{t_k}^{t_{k+1}}
   \int_0^{T_0}\mathbb{E}|I(r^h(s)-r^h(t_k))(y)|^2dyds \\ 
 &\le C\Delta t + C\Delta t\sum_{k=0}^{n-1}\sup_{t_k\le s\le t_{k+1}}
   \max_{y\in\Gamma\cup\Gamma_e}\mathbb{E}|r^h(s,y)-r^h(t_k,y)|^2. 
\end{align*}
Here we have used Lemma \ref{lem:3.6} and \eqref{eq:3.8} to derive the last inequality. 
Further, it follows from again Lemma \ref{lem:3.6}, 
Assumption \ref{assum:3.1}, and \eqref{eq:3.8} that for $k=0,\ldots,n-1$, $s\in [t_k,t_{k+1}]$, 
$y\in [0,R]$, 
\begin{align*}
 &\mathbb{E}|r^h(s,y)-r^h(t_k,y)|^2 \\ 
 &\le 2\mathbb{E}|I^{\prime}(r^h(t_k))(y) 
  + \alpha(t_k,I(r^h(t_k)))(y)|^2(\Delta t)^2 
  +2\sum_{i=1}^d\mathbb{E}|\sigma_i(t_k,I(r^h(t_k)))(y)|^2\Delta t \\ 
 &\le C\left(1+\max_{j=1,\ldots,N}\mathbb{E}|r^h(t_k,x_j)|^2\right)\Delta t.  
\end{align*}
From this and Lemma \ref{lem:3.8}, we deduce 
\begin{equation}
\label{eq:3.14}
 \mathcal{I}_5\le C\Delta t, \quad i=1,\ldots,d. 
\end{equation}
Similarly, we obtain 
\begin{equation}
\label{eq:3.15}
 \mathcal{I}_6\le C\Delta t + C(\Delta x)^{(2\tau-1)/\tau}R^{1/\tau} 
  + C\int_0^t\max_{y\in\Gamma\cup\Gamma_e}
 \mathbb{E}|r(s,y)-r^h(s,y)|^2ds. 
\end{equation}
Consequently, \eqref{eq:3.6}, \eqref{eq:3.7}, \eqref{eq:3.11}--\eqref{eq:3.15} yield 
\[
 \max_{x\in\Gamma\cup\Gamma_e}\mathbb{E}|r(t,x)-r^h(t,x)|^2\le 
 C\Delta t + C(\Delta x)^{(2\tau-1)/\tau}R^{1/\tau} + C\int_0^t\max_{x\in\Gamma\cup\Gamma_e}
 \mathbb{E}|r(s,x)-r^h(s,x)|^2ds.
\]
Finally, applying Gronwall's lemma for the function 
$t\mapsto \max_{x\in\Gamma\cup\Gamma_e}\mathbb{E}|r(t,x)-r^h(t,x)|^2$, 
we complete the proof of the theorem. 
\end{proof}

\section{Numerical examples}\label{sec:4}

In this section, we give two numerical illustrations of our collocation method. 
\begin{ex}[Vasicek model]
The HJMM equation corresponding to Vasicek model is described as 
\[
 dr(t,x) = \left(\frac{\partial}{\partial x} r(t,x) + \sigma(x)\int_0^x\sigma(y)dy\right)dt 
 + \sigma(x)dW_1(t)
\]
where $\sigma(x)=\sigma e^{-\lambda x}$, $\sigma, \lambda>0$, and 
\[
 r(0,x) = r_0e^{-\lambda x} + b(1-e^{-\lambda x}) 
  - \frac{\sigma^2}{2\lambda^2}(1-e^{-\lambda x})^2. 
\]
The unique mild to solution $r(t,x)$ to this equation is given by 
\begin{align*}
 r(t,x) &= r(0,t+x) + \int_0^t\sigma(t-s+x)\int_0^{t-s+x}\sigma(y)dyds 
  + \int_0^t\sigma(t-s+x)dW_1(s) \\ 
 &= r(0,t+x) + \frac{\sigma^2}{2\lambda^2}(2e^{-\lambda x}(1-e^{-\lambda t}) 
  - e^{-2\lambda x}(1-e^{-2\lambda t})) 
   + \sigma \int_0^te^{-\lambda (t-s+x)}dW_1(s). 
\end{align*}
We examine our collocation method to this model with the following parameters: 
$T=1$, $\sigma=0.1$, $b=0.02$, $\lambda=1.0$, and $r_0=0.02$. 
We use the Wendland kernel $\phi_{4}$ scaled by some positive constant for 
the performance test. 
We choose the time grid as a uniform one in $[0,1]$, and 
as suggested in Remark \ref{rem:3.2}, we define $\Gamma$ by the uniform spatial grid points 
on $[R/(N+1),R-R/(N+1)]$ where 
$R = (1/5)N^{(1-1/(2\tau-2))}$, while the set of evaluation points 
$\Gamma_e=\{\xi_1,\ldots,\xi_{100}\}$ by 
a Sobol' sequence on $[R/4,3R/4]$. 

It should be remarked that this model satisfies Assumption \ref{assum:3.1} with 
$w(x)=e^{\lambda x}$ and $\Psi(x)=\sigma(1+\lambda +\lambda^2)e^{-\lambda x}$. 
To check the validity of Assumption \ref{assum:3.2} (ii), we plot 
\[
 \iota(N)=\max_{m=0,1,2}\max_{x\in\Gamma\cup\Gamma_e}
  \#\left\{j: \left|\frac{d^m}{dx^m}Q_j(x)\right|> \frac{25}{N}\right\}
\]
in Figure \ref{fig:4.1}. 
\begin{figure}[htbp]
\centering
\includegraphics[width=0.5\columnwidth, bb=0 0 512 384]
{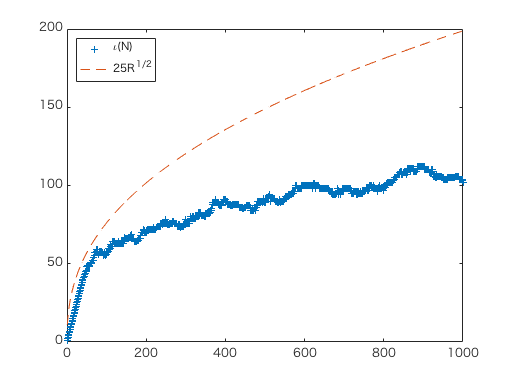}
\caption{Plotting $\iota(N)$ and $25\sqrt{R}$ for $N=1,2,\ldots,1000$.}
\label{fig:4.1}
\end{figure}
We can see that $\iota(N)<25\sqrt{R}$ for all $N\le 1000$. 
Thus, Assumption \ref{assum:3.2} (ii) seems to be satisfied with $c_1=25$ and $c_2=25$
for the sequence of the tuning parameters defined by $N$ from $1$ at least to $1000$. 

To compare an averaged performance, we compute the root mean squared errors 
averaged over $10000$ samples, defined by 
\begin{align*}
\text{RMSE} &:= \sqrt{\frac{1}{10000\times 100(n+1)}\sum_{i=0}^n\sum_{j=1}^{100}
 \sum_{\ell=1}^{10000}
|r_{\ell}(t_i,\xi_j)-r^{h}_{\ell}(t_i,\xi_j)|^2},    
\end{align*} 
for several values of $N$ and $n$. 
Here, $r_{\ell}$ and $r^{h}_{\ell}$ are 
the exact solution and approximate solution at $\ell$-th trial, respectively.  

Table \ref{table:4.1} shows that the resulting RMSE's 
are sufficiently small for all pairs $(N,n)$ although its decrease is nonmonotonic. 
This illustrates the convergence result given in Theorem \ref{thm:3.5}. 
\begin{table}[htb]
\centering
\begin{tabular}[t]{ccccccccc} 
\toprule
& $N$ &   &  $R$ & & $n$  & &$\text{RMSE}$ & \\  \midrule
& $2^6$ & &6.4000 & &$2^{4}$  & & 0.0180 &\\  
&           &  &           & &$2^{6}$  & & 0.0343 & \\  
&           &  &           & &$2^{8}$ & & 0.0683 &\\ \midrule
& $2^7$ & &11.4035 & &$2^{4}$ & & 0.0037 &\\ 
&           & &              & &$2^{6}$ & & 0.0074 &\\ 
&           &  &             & &$2^{8}$ & & 0.0146 &\\ \midrule 
& $2^8$  & & 20.3187 & &$2^{4}$ & & 2.9578e-04 &\\
&            & &               & &$2^{6}$ & & 5.9918e-04 &\\ 
&            & &               & &$2^{8}$ & & 0.0012 &\\ \bottomrule
\end{tabular}
\caption{The resulting root mean squared errors 
for several pairs $(N,n)$.}
\label{table:4.1}
\end{table}
\end{ex}

\begin{ex}[Pricing of caplets]
Here, we apply our approximation methods to the pricing of the caplet. 
To this end, we consider the forward $6$ months LIBOR prevailing at time 
$t$ over the future period $[T,T+0.5]$, defined by 
\[
 1+0.5 L(t,T)=\frac{P(t,T)}{P(t,T+0.5)}, 
\]
or 
\[
 L(t,T)=\frac{1}{0.5}\left(\exp\left(\int_T^{T+0.5}f(t,s)ds\right)-1\right) 
  =\frac{1}{0.5}\left(\exp\left(\int_{T-t}^{T-t+0.5}r(t,x)dx\right)-1\right), 
\]
where $P(t,T)$ and $f(t,s)$ are the price of the discounted bond and 
the forward rate, respectively, given in Section \ref{sec:2}. 
Then the caplet price with $T=10$ is given by 
\begin{equation}
\label{eq:4.1}
 0.5\mathbb{E}\left[e^{-\int_0^{T+0.5}r(s,0)ds}\max(L(T,T)-\kappa,0)\right]. 
\end{equation}
As the volatility functional $\sigma_i$'s, we examine the $5$ year  
yield-dependent model 
\begin{equation*}
 \sigma_1(t,\phi)(x) = \theta_1 e^{-\theta_2 x}
  \max\left(0, \min\left(\frac{1}{5}\int_0^{5}\phi(y)dy, 1\right)\right), 
\end{equation*}
which is addressed in \cite{fil:2001} in a slightly different form. 
As in the previous example, Assumption \ref{assum:3.1} is satisfied with $w(x)=e^{\theta_2x}$ and 
$\Psi(x)=\theta_1(1+\theta_2+\theta_2^2)e^{-\theta_2x}$.

We use the Wendland kernel $\phi_{4}$ scaled by some positive constant. 
The time and spatial grids are set to be 
$t_i=i10^{-4}$, $i=0,1,\ldots,10^5$, and 
$x_j=j/5$, $j=1,\ldots,50$, respectively. 
To confirm Assumption \ref{assum:3.2} (ii) numerically, 
we consider the uniform spatial grid points on $[R/N, R]$ where 
$R = (2/5)N^{\log 25/\log 50}$ so that $R=10$ for $N=50$. 

\begin{figure}[htbp]
\centering
\includegraphics[width=0.5\columnwidth, bb=0 0 1120 840]
{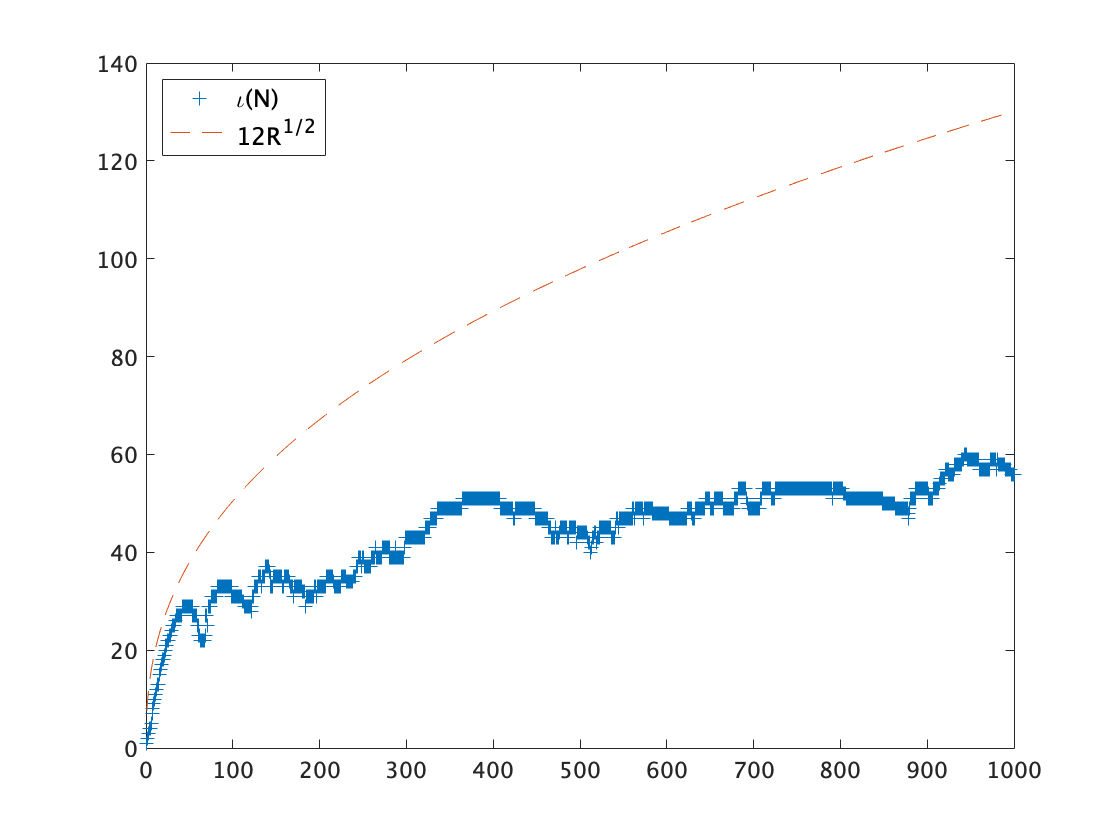}
\caption{Plotting $\iota(N)$ and $12\sqrt{R}$ for $N=1,2,\ldots,1000$.}
\label{fig:4.2}
\end{figure}
We can see that $\iota(N)<25\sqrt{R}$ for all $N\le 1000$ in Figure \ref{fig:4.2}. 
Thus, Assumption \ref{assum:3.2} (ii) seems to be satisfied with $c_1=12$ and $c_2=25$
for the sequence of the tuning parameters defined by $N$ from $1$ at least to $1000$.

Let $r^h$ be the approximate solution, described in Section \ref{sec:3}, 
of the HJMM equation with the yield-dependent volatility model above. 
Then, \eqref{eq:4.1} is approximated by 
\begin{equation}
\label{eq:4.2}
 0.5\mathbb{E}\left[e^{-10^{-5}\sum_{i=0}^{k}r^h(t_i,\xi_1)}
  \max(L^h(T,T)-\kappa,0)\right], 
\end{equation}
with 
\[
  L^h(T,T)
  =\frac{1}{0.5}\left(\exp\left(r^h(T,\xi_1)/10+r^h(T,\xi_2)/5+7r^h(T,\xi_3)/40 + r^h(T,\xi_4)/40
  \right)-1\right), 
\]
where $\Gamma_e=\{\xi_j\}$ is the set of the evaluation points given by 
$\xi_j=(j-1)/5$, $j=1,\ldots,51$ and 
we have used the approximation 
\begin{align*}
 \int_0^{0.5}r^h(T,x)dx &\approx \frac{1}{2}(r^h(T,\xi_1)+r^h(T,\xi_2))\times 0.2 
  + \frac{1}{2}(r^h(T,\xi_2)+r^h(T,\xi_3))\times 0.2 \\  
  &\quad +\frac{1}{2}\left(r^h(T,\xi_3) + \frac{1}{2}(r^h(T,\xi_3) + r^h(T,\xi_4))\right)\times 0.1.   
\end{align*}

Figure \ref{fig:4.3} plots the approximated price \eqref{eq:4.2} with $N=50$ for 
$\theta_1\in\{0.05,0.1,\ldots,2.5\}$ and $\theta_2\in\{0.05,\ldots,1.5\}$. 
We can confirm a regular behavior of our approximation with respect to 
the changes of the parameters $\theta_1$ and $\theta_2$. 

\begin{figure}[htbp]
\centering
\includegraphics[width=0.9\columnwidth, bb=0 0 1053 725]
{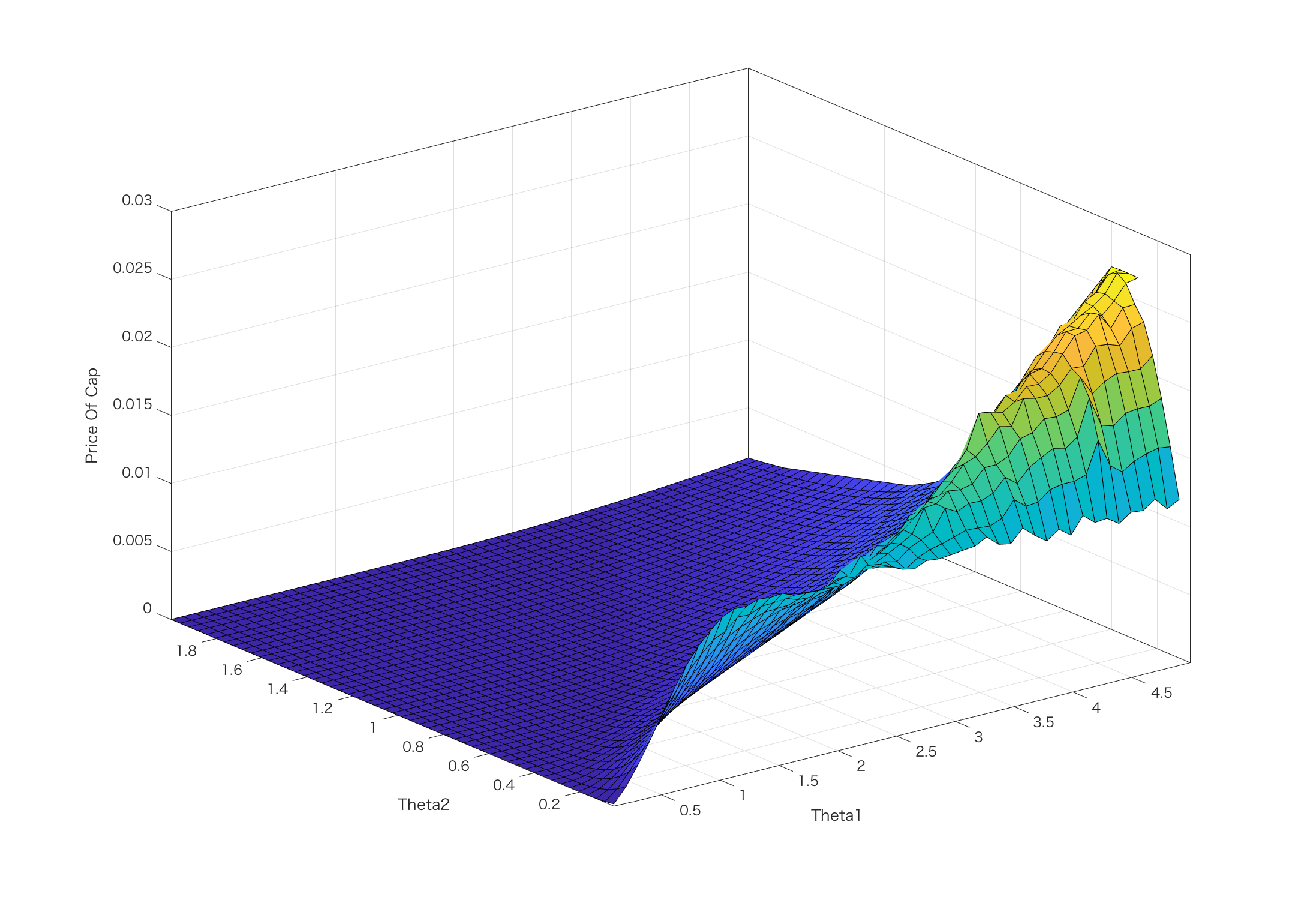}
\caption{Plotting the approximated prices in the cases of 
$T=10$, $\kappa=0.04$, and 
$(\theta_1,\theta_2)\in\{0.05,0.1,\ldots,2.5\}\times \{0.05,\ldots,1.5\}$. }
\label{fig:4.3}
\end{figure}

\end{ex}

\section*{Acknowledgements}

This study is supported by JSPS KAKENHI Grant Number JP17K05359.
The authors are thankful to Kanji Kurihara and Hideki Noda for helping us perform the numerical 
experiments in Section \ref{sec:4}. 

\bibliographystyle{hplain}
\bibliography{../mybib}

\end{document}